\providecommand{\U}[1]{\protect\rule{.1in}{.1in}}
\newtheorem{theorem}{Theorem}
\newtheorem{acknowledgement}[theorem]{Acknowledgement}
\newtheorem{definition}[theorem]{Definition}
\newtheorem{lemma}[theorem]{Lemma}
\newtheorem{proposition}[theorem]{Proposition}
\newtheorem{remark}[theorem]{Remark}
\newenvironment{proof}[1][Proof]{\noindent\textbf{#1.} }{\ \rule{0.5em}{0.5em}}
\begin{document}

\title{Toeplitz Density Operators and their Separability Properties}
\author{Maurice de Gosson\thanks{maurice.de.gosson@univie.ac.at}\\University of Vienna\\Faculty of Mathematics (NuHAG)}
\maketitle

\begin{abstract}
Toeplitz operators (also called localization operators) are a generalization
of the well-known anti-Wick pseudodifferential operators studied by Berezin
and Shubin. When a Toeplitz operator is positive semi-definite and has trace
one we call it a density Toeplitz operator. Such operators represent physical
states in quantum mechanics. In the present paper we study several aspects of
Toeplitz operators when their symbols belong to some well-known functional
spaces (e.g. the Feichtinger algebra) and discuss (tentatively) their
separability properties.

\end{abstract}

\textbf{AMS subject classification 2010}: 47B35, 47G30, 35S05, 46E35, 47B10

Keywords: Anti-Wick, Toeplitz operator; Weyl operator; Weyl symbol; trace
class; separability

\section{Introduction}

There is a vast mathematical literature on Toeplitz operators and their
variants (generalized anti-Wick operators), but these operators are much less
known and used in quantum mechanics. This is a kind of paradox since Toeplitz
operators were advertised and developed under the influence of Berezin and
Shubin \cite{Berezin,beshu} in the context of quantization. Certain particular
cases are however known to most quantum physicists under the name of
\textquotedblleft anti-Wick quantization\textquotedblright\ or
\textquotedblleft Husimi function\textquotedblright. Still, the theory of
Toeplitz operators is much better known and more often used in the related
discipline of time-frequency analysis; among many references the reader might
want to consult the following papers
\cite{Balasz,boco1,bocogr,cogr,cogr0,cogr1,coro,grotoft,GH04,Luef,pili,Toft,togr}
to get an idea of what is going on in the field (I refrain from opposing
quantum mechanics and time-frequency analysis, one supposedly being a physical
theory and the other a mathematical theory, since both concern themselves wit
physical objects. It is just their aims and \textquotedblleft
philosophy\textquotedblright\ which differ). The aim of this paper is not to
review quite generally the theory of Toeplitz operators, but more modestly to
focus on the case where such operators are of trace class, more precisely
density operators: a density operator on a complex separable Hilbert space
$\mathcal{H}$ is a positive semidefinite trace class operator $\widehat{\rho}$
with trace $\operatorname*{Tr}(\widehat{\rho})=1$. We assume in this paper
that $\mathcal{H}=L^{2}(\mathbb{R}^{n})$. The following properties of density
operators are well-known: \textit{(i)} $\widehat{\rho}$ is self-adjoint;
\textit{(ii)} $\widehat{\rho}$ is the product of two Hilbert--Schmidt
operators (and hence compact); \textit{(iii)} $\widehat{\rho}$ is positive
semidefinite: $\widehat{\rho}$ $\geq0$. By the spectral theorem, there exists
an orthonormal basis $(\phi_{j})_{j}$ of $L^{2}(\mathbb{R}^{n})$ and
coefficients satisfying $\lambda_{j}\geq0$ and $\sum_{j}\lambda_{j}=1$ such
that $\widehat{\rho}$ can be written as a convex sum $\sum_{j}\lambda
_{j}\widehat{\Pi}_{\phi_{j}}$ of orthogonal projections $\widehat{\Pi}%
_{\phi_{j}}:$ $L^{2}(\mathbb{R}^{n})\longrightarrow\mathbb{C}\phi_{j}$
converging in the strong operator topology. The importance of density
operators in quantum mechanics comes from the fact that they represent (and
are identified with) \textquotedblleft mixed quantum states\textquotedblright;
these are mixtures of $L^{2}$-normalized \textquotedblleft pure
states\textquotedblright\ $(\psi_{j})_{j}$ in $L^{2}(\mathbb{R}^{n})$ weighted
by probabilities $\mu_{j}\geq0$ summing up to one; the corresponding mixed
state is then by definition the operator $\widehat{\rho}=\sum_{j}\mu
_{j}\widehat{\Pi}_{\psi_{j}}$ and represents the maximal knowledge one has
about the system under consideration. It is not difficult \cite{Birkbis,QHA}
to check that the operator $\widehat{\rho}$ thus defined indeed is a density
operator; note that the decomposition $\sum_{j}\mu_{j}\widehat{\Pi}_{\psi_{j}%
}$ of $\widehat{\rho}$ has no reason to be unique (Jayne's theorem, see
however \cite{CMdG1} where we compare different expansions of pure states). A
density operator is \textit{de facto} a Weyl operator in view of Schwartz's
kernel theorem; its Weyl symbol is $(2\pi\hbar)^{n}\rho$ where $\rho$ is the
\textquotedblleft Wigner distribution of $\widehat{\rho}$\textquotedblright%
\ defined by
\begin{equation}
\rho=\sum_{j}\mu_{j}W\psi_{j}\label{MGro1}%
\end{equation}
the series being convergent in the $L^{2}$-norm. Here $W\psi_{j}$ is the usual
Wigner transform of $\psi_{j}$. Consider now, as we did in \cite{ATFA}, a
family of functions
\[
\phi_{z_{\lambda}}(x,p)=e^{\frac{i}{\hbar}p_{\lambda}(x-x_{\lambda})}%
\phi(x-x_{\lambda})
\]
the $z_{\lambda}=(x_{\lambda},p_{\lambda})$ belonging to some lattice
$\Lambda\subset\mathbb{R}^{2n}$ and $\phi\in L^{2}(\mathbb{R}^{n})$ (hereafter
called \textquotedblleft window\textquotedblright) is a fixed function with
unit $L^{2}$-norm. We can define a corresponding density operator by
\begin{equation}
\widehat{\rho}=\sum_{z_{\lambda}\in\Lambda}\mu_{z_{\lambda}}\widehat{\Pi
}_{\phi_{z_{\lambda}}}\label{rhofila}%
\end{equation}
and its Wigner distribution is given by
\begin{equation}
\rho(z)=\sum_{z_{\lambda}\in\Lambda}\mu_{z_{\lambda}}W\phi(z-z_{\lambda})
\end{equation}
in view of the translational properties of the Wigner transform. We can
rewrite the formula above as the convolution product
\begin{equation}
\rho=\mu\ast W\phi\text{ \ , \ }\mu=\sum_{z_{\lambda}\in\Lambda}%
\mu_{z_{\lambda}}\delta_{z_{\lambda}}\label{discrete1}%
\end{equation}
where $\delta_{z_{\lambda}}$ is the Dirac measure on $\mathbb{R}^{2n}$
centered at $z_{\lambda}$. This suggests to consider more general
operators\ with Weyl symbol $(2\pi\hbar)^{n}\rho$ with $\rho=\mu\ast W\phi$
where $\mu$ is a Borel measure on $\mathbb{R}^{2n}$. The aim of this paper is
to study the properties of such operators; when they are density operators we
will call them \emph{Toeplitz density operators}. Such operators were first
considered by Berezin \cite{Berezin,beshu}, and have been developed since then
by various authors, sometimes under the name of \textit{localization
operators}, in time-frequency analysis
\cite{boco1,bocogr,coro,grotoft,Luef,pili}

\paragraph*{Notation}

We write $x=(x_{1},...,x_{n})$ and $p=(p_{1},...,p_{n})$; we will use the
notation $px$ for the inner product $p_{1}x_{1}+\cdot\cdot\cdot+p_{n}x_{n}$.
\textit{The scalar product on} $L^{2}(\mathbb{R}^{n})$ \textit{is defined by}%
\begin{equation}
(\psi|\phi)_{L^{2}}=\int_{\mathbb{R}^{n}}\psi(x)\overline{\phi(x)}dx.
\label{L2}%
\end{equation}
\textit{The space} $T^{\ast}\mathbb{R}^{n}\equiv\mathbb{R}^{2n}$ \textit{will
be equipped with the canonical symplectic structure} $\sigma=\sum_{j=1}%
^{n}dp_{j}\wedge dx_{j}$, \textit{given in matrix notation by} $\sigma
(z,z^{\prime})=Jz\cdot z^{\prime}$ \textit{where} $J=%
\begin{pmatrix}
0 & I\\
-I & 0
\end{pmatrix}
$ \textit{is the standard symplectic matrix.}

\section{Weyl and Anti-Wick Operators}

We are using here the notation in \cite{Birkbis} which the Reader can consult
for details and proofs.

\subsection{Weyl pseudodifferential operators\label{subsec11}}

Recall that the cross-Wigner transform of $(\psi,\phi)\in L^{2}(\mathbb{R}%
^{n})\times L^{2}(\mathbb{R}^{n})$ is defined by the absolutely convergent
integral
\begin{equation}
W(\psi,\phi)(x,p)=\left(  \tfrac{1}{2\pi\hbar}\right)  ^{n}\int_{\mathbb{R}%
^{n}}e^{-\tfrac{i}{\hbar}py}\psi(x+\tfrac{1}{2}y)\overline{\phi(x-\tfrac{1}%
{2}y)}dy~; \label{CrossW}%
\end{equation}
in particular $W(\psi,\psi)=W\psi$ is the usual Wigner transform. We have the
important relation:(\cite{Birkbis}, \S 9.2)
\begin{equation}
\int_{\mathbb{R}^{2n}}W(\psi,\phi)(z)dz=(\psi|\phi)_{L^{2}}. \label{marge}%
\end{equation}
Let $a\in\mathcal{S}^{\prime}(\mathbb{R}^{2n})$; the Weyl operator
$\widehat{A}=\operatorname*{Op}\nolimits_{\mathrm{W}}(a)$ with symbol $a$ is
(by definition) the unique operator $\widehat{A}:\mathcal{S}(\mathbb{R}%
^{n})\longrightarrow\mathcal{S}^{\prime}(\mathbb{R}^{n})$ such that%
\begin{equation}
\langle\widehat{A}\psi,\overline{\phi}\rangle=\langle\langle a,W(\psi
,\phi)\rangle\rangle\label{defA}%
\end{equation}
for all $(\psi,\phi)\in\mathcal{S}(\mathbb{R}^{n})\times\mathcal{S}%
(\mathbb{R}^{n})$ where $\langle\cdot,\cdot\rangle$ (\textit{resp}.
$\langle\langle\cdot,\cdot\rangle\rangle$) is the distributional bracket on
$\mathbb{R}^{n}$ (\textit{resp}. $\mathbb{R}^{2n}$). Using the Heisenberg
displacement operator%
\[
\widehat{T}(z_{0})\psi(x)=e^{\frac{i}{\hbar}(p_{0}x-\frac{1}{2}p_{0}x_{0}%
)}\psi(x-x_{0})
\]
we have the harmonic decomposition%
\begin{equation}
\widehat{A}=\left(  \tfrac{1}{2\pi\hbar}\right)  ^{n}\int_{\mathbb{R}^{2n}%
}a_{\sigma}(z)\widehat{T}(z)dz \label{weyl1}%
\end{equation}
where $a_{\sigma}$ is the symplectic Fourier transform of $a$:%
\begin{equation}
a_{\sigma}(z)=\left(  \tfrac{1}{2\pi\hbar}\right)  ^{n}\int_{\mathbb{R}^{2n}%
}e^{-\frac{i}{\hbar}\sigma(z,z^{\prime})}a(z)dz. \label{sft}%
\end{equation}
When $a\in\mathcal{S}(\mathbb{R}^{2n})$ we get the familiar textbook
definition%
\begin{equation}
\widehat{A}\psi(x)=\left(  \tfrac{1}{2\pi\hbar}\right)  ^{n}\int%
_{\mathbb{R}^{n}\times\mathbb{R}^{n}}e^{-\tfrac{i}{\hbar}p(x-y)}a(\tfrac{1}%
{2}(x+y),p)\psi(y)dpdy \label{weyl2}%
\end{equation}
valid for $\psi\in\mathcal{S}(\mathbb{R}^{n})$. In particular, the
distributional kernel of $\widehat{A}$ is
\begin{equation}
K(x,y)=\left(  \tfrac{1}{2\pi\hbar}\right)  ^{n}\int_{\mathbb{R}^{n}%
}e^{-\tfrac{i}{\hbar}p(x-y)}a(\tfrac{1}{2}(x+y),p)dp \label{kxy}%
\end{equation}
hence, using the Fourier inversion formula,%
\begin{equation}
a(x.p)=\int_{\mathbb{R}^{n}}e^{-\tfrac{i}{\hbar}py}K(x+\tfrac{1}{2}%
y,x-\tfrac{1}{2}y)dy. \label{axp}%
\end{equation}
We will also need the notion of transpose of a Weyl operator. Let
$\widehat{A}:\mathcal{S}(\mathbb{R}^{n})\longrightarrow\mathcal{S}%
(\mathbb{R}^{n})$ be a linear operator; the transpose $\widehat{A}^{T}$ of
$\widehat{A}$ is the unique operator $\mathcal{S}(\mathbb{R}^{n}%
)\longrightarrow\mathcal{S}(\mathbb{R}^{n})$ such that $\langle\widehat{A}%
\psi,\phi\rangle=\langle\psi,\widehat{A}^{T}\phi\rangle$ for all $(\psi
,\phi)\in\mathcal{S}(\mathbb{R}^{n})\times\mathcal{S}(\mathbb{R}^{n})$. If
$\widehat{A}$ has kernel $(x,y)\longmapsto K(x,y)$ then $\widehat{A}^{T}$ has
the kernel $(x,y)\longmapsto K(y,x)$ hence, in view of (\ref{axp}),%
\begin{equation}
\widehat{A}^{T}=\operatorname*{Op}\nolimits_{\mathrm{W}}(a\circ\overline
{I})\text{ \ , \ }\overline{I}(x,p)=(x,-p). \label{transp}%
\end{equation}

Weyl pseudo-differential operators enjoy the property of symplectic
covariance: let $\operatorname*{Sp}(n)$ be the standard symplectic group of
$\mathbb{R}^{2n}$. It is the group of all linear automorphisms of
$\mathbb{R}^{2n}$ such that $S^{\ast}\sigma=\sigma$ where $\sigma(z,z^{\prime
})=p\cdot x^{\prime}-p^{\prime}\cdot x$ if $z=(x,p)$, $z^{\prime}=(x^{\prime
},p^{\prime})$. We denote by $\operatorname*{Mp}(n)$ the unitary
representation in $L^{2}(\mathbb{R}^{n})$ of the double cover of
$\operatorname*{Sp}(n)$; $\operatorname*{Mp}(n)$ is the metaplectic group
\cite{Birkbis}; every $S\in\operatorname*{Sp}(n)$ is thus the projection of
two elements $\pm\widehat{S}$ of $\operatorname*{Mp}(n)$. We have
(\cite{Birkbis}, \S 10.3.1):%
\begin{equation}
\operatorname*{Op}\nolimits_{\mathrm{W}}(a\circ S^{-1})=\widehat{S}%
\operatorname*{Op}\nolimits_{\mathrm{W}}(a)\widehat{S}^{-1}. \label{sympco}%
\end{equation}

Here is a general result for the calculation of the trace of a Weyl operator:

\begin{proposition}
\label{propduwong}Let $\widehat{A}=\operatorname*{Op}\nolimits_{\mathrm{W}%
}(a)$ be a trace class operator. If $a\in L^{1}(\mathbb{R}^{2n})$ then%
\begin{equation}
\operatorname*{Tr}(\widehat{A})=\left(  \tfrac{1}{2\pi\hbar}\right)  ^{n}%
\int_{\mathbb{R}^{2n}}a(z)dz. \label{duwong1}%
\end{equation}

\end{proposition}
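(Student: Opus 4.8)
The plan is to compute the trace by expressing it through the distributional kernel $K(x,y)$ and then relating $K$ back to the symbol $a$ via the inversion formula \eqref{axp} already established in the text. For a trace class operator the trace equals the integral of the kernel along the diagonal, $\operatorname*{Tr}(\widehat{A})=\int_{\mathbb{R}^{n}}K(x,x)\,dx$, so the first step is to justify this identity under the stated hypotheses. Setting $y=0$ in \eqref{kxy} gives
\begin{equation}
K(x,x)=\left(\tfrac{1}{2\pi\hbar}\right)^{n}\int_{\mathbb{R}^{n}}a(x,p)\,dp,
\end{equation}
and integrating over $x$ yields the claimed formula \eqref{duwong1} immediately. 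So the formal computation is a one-liner; the real content is the analytic justification of each interchange.

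The key steps I would carry out in order are: first, recall that the trace of a trace class operator $\widehat{A}=\int\lambda_{j}(\cdot\,|\,\phi_{j})\psi_{j}$ is $\sum_j\lambda_j(\psi_j|\phi_j)$, and that for such operators with a continuous (or suitably regular) kernel the trace coincides with $\int K(x,x)\,dx$; this is a standard but nontrivial fact, since the diagonal of an $L^2$ kernel is not automatically well defined. Second, I would invoke the hypothesis $a\in L^1(\mathbb{R}^{2n})$ to control the inner $p$-integral in \eqref{kxy}: absolute integrability of $a$ lets one apply Fubini to write $\int_{\mathbb{R}^n}K(x,x)\,dx$ as the double integral $(2\pi\hbar)^{-n}\int_{\mathbb{R}^{2n}}a(x,p)\,dp\,dx$. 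Third, I would collect these to obtain \eqref{duwong1}.

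The main obstacle is the first step: making rigorous the passage ``$\operatorname*{Tr}(\widehat{A})=\int K(x,x)\,dx$.'' For a general trace class operator the kernel $K$ lies only in $L^2(\mathbb{R}^{2n})$, whose restriction to the diagonal $\{x=y\}$ (a null set) is meaningless; one needs the combined regularity coming from $\widehat A$ being trace class \emph{and} $a\in L^1$. The clean way around this is to first prove the identity on a dense, well-behaved subclass, for instance symbols $a\in\mathcal{S}(\mathbb{R}^{2n})$ where \eqref{weyl2} holds literally and the kernel is smooth and rapidly decaying, so that $K(x,x)$ is unambiguous and Fubini applies without difficulty. On that subclass both sides of \eqref{duwong1} are continuous with respect to the relevant topologies (the $L^1$-norm on the symbol side and the trace norm on the operator side, using $\|\widehat A\|_{\mathrm{Tr}}\gtrsim$ the quantities in play). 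One then extends by a density and continuity argument to all trace class $\widehat A$ with $a\in L^1$, invoking the Duflo--Wong type result referenced implicitly by the proposition's label.

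Alternatively, and perhaps more in the spirit of the Weyl calculus developed above, I would argue directly from the defining relation \eqref{defA} together with the marginal identity \eqref{marge}: testing against an orthonormal eigenbasis $(\phi_j)$ of $\widehat A$ gives $\operatorname*{Tr}(\widehat A)=\sum_j\langle\langle a,W\phi_j\rangle\rangle$, and summing the Wigner distributions $\sum_j\lambda_j W\phi_j$ produces precisely the Weyl symbol's normalization; by \eqref{marge} the spatial integral of this sum is $\sum_j\lambda_j=\operatorname*{Tr}(\widehat A)$, and pairing against the constant-like behaviour of $a$ under $\int dz$ recovers \eqref{duwong1}. Either route reduces to the same analytic heart, namely the legitimacy of interchanging the trace (a sum over eigenvalues) with the symbol integral, which the $L^1$ hypothesis on $a$ is exactly designed to secure.
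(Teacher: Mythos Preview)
The paper does not prove this proposition: it is stated as a known result (the label alludes to Du and Wong), and the reader is later pointed to \cite{Birkbis}, \S 1.2.3, for trace formulas of this type. So there is no in-paper argument to compare against; your proposal must be assessed on its own.

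Your identification of the crux---justifying $\operatorname*{Tr}(\widehat{A})=\int K(x,x)\,dx$---is correct, but the density-and-continuity step you propose does not close. Approximating $a$ by $a_{k}\in\mathcal{S}(\mathbb{R}^{2n})$ in the $L^{1}$-norm controls the right-hand side of \eqref{duwong1}, but gives no control on the left: the Weyl map $a\mapsto\operatorname*{Op}\nolimits_{\mathrm{W}}(a)$ is not bounded from $L^{1}(\mathbb{R}^{2n})$ into the trace class (indeed $a\in L^{1}$ alone does not force $\widehat{A}$ to be trace class, which is precisely why the proposition assumes it separately), so $a_{k}\to a$ in $L^{1}$ does not yield $\operatorname*{Tr}(\widehat{A}_{k})\to\operatorname*{Tr}(\widehat{A})$. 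Your alternative spectral route has the same obstruction in disguise: with $a=(2\pi\hbar)^{n}\sum_{j}\lambda_{j}W(\psi_{j},\phi_{j})$ you must interchange $\sum_{j}$ and $\int_{\mathbb{R}^{2n}}dz$, but each $W(\psi_{j},\phi_{j})$ lies only in $L^{2}$, and $\sum_{j}|\lambda_{j}|\,\|W(\psi_{j},\phi_{j})\|_{L^{1}}$ is in general infinite, so dominated convergence is not available. The proofs in the literature (Du--Wong, or Brislawn's averaged-diagonal theorem for trace class kernels) are designed exactly to bridge this gap, and your sketch would need to import one of them rather than rely on a bare density argument.
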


We will give refinements of this statement in Propositions \ref{PropTrace} and
\ref{Prop4} using the so-called Feichtinger algebra.

Notice that Proposition \ref{propduwong} requires that we know from the
beginning that $\widehat{A}$ is of trace class. We can get stronger statement
if we assume that the symbol $a$ belongs to some appropriate Shubin class
\cite{sh87}. A function $a\in C^{\infty}(\mathbb{R}^{2n})$ belongs to the
Shubin class $\Gamma_{\delta}^{m}(\mathbb{R}^{2n})$ if for every $\alpha
\in\mathbb{N}^{2n}$ there exists a constant $C_{\alpha}\geq0$ such that
\begin{equation}
|\partial_{z}^{\alpha}a(z)|\leq C_{\alpha}(1+|z|)^{m-\delta|\alpha|}\text{ for
}z\in\mathbb{R}^{2n}. \label{est1}%
\end{equation}

\begin{proposition}
\label{PropTrace}Let $\widehat{A}=\operatorname*{Op}\nolimits_{\mathrm{W}}(a)$
with $a\in\Gamma_{\delta}^{m}(\mathbb{R}^{2n})$. If $m<-2n$ then $\widehat{A}$
is of trace class and we have
\begin{equation}
\operatorname*{Tr}(\widehat{A})=\left(  \tfrac{1}{2\pi\hbar}\right)  ^{n}%
\int_{\mathbb{R}^{2n}}a(z)dz=a_{\sigma}(0) \label{trashub}%
\end{equation}
where $a_{\sigma}=F_{\sigma}a$ is the symplectic Fourier transform (\ref{sft})
of the symbol $a$.
\end{proposition}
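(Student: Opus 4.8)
The statement splits into an operator–theoretic part (that $\widehat{A}$ is of trace class) and two identities, of which only the first requires work. The second equality is immediate: setting $z=0$ in the definition (\ref{sft}) kills the exponential, since $\sigma(0,z^{\prime})=0$, so $a_{\sigma}(0)=\left(\tfrac{1}{2\pi\hbar}\right)^{n}\int_{\mathbb{R}^{2n}}a(z)\,dz$. The first identity is then exactly Proposition \ref{propduwong}, whose only hypothesis beyond trace class is $a\in L^{1}(\mathbb{R}^{2n})$; but the estimate (\ref{est1}) with $\alpha=0$ gives $|a(z)|\leq C_{0}(1+|z|)^{m}$, and $(1+|z|)^{m}$ is integrable over $\mathbb{R}^{2n}$ precisely when $m<-2n$. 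So once trace class is established the trace formula follows for free, and the entire content of the proposition is to show that $m<-2n$ forces $\widehat{A}$ into the trace class.

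The plan is to write $\widehat{A}$ as a product of two Hilbert--Schmidt operators (this is also the structure, recalled as property \textit{(ii)} in the Introduction, that makes such operators relevant as density operators). Two facts from Shubin's calculus \cite{sh87} are the engine. First, a Weyl operator $\operatorname*{Op}\nolimits_{\mathrm{W}}(b)$ is Hilbert--Schmidt if and only if $b\in L^{2}(\mathbb{R}^{2n})$, with $\Vert\operatorname*{Op}\nolimits_{\mathrm{W}}(b)\Vert_{HS}^{2}=\left(\tfrac{1}{2\pi\hbar}\right)^{n}\Vert b\Vert_{L^{2}}^{2}$; in particular any $b\in\Gamma_{\delta}^{\mu}$ with $\mu<-n$ is Hilbert--Schmidt, since then $\int_{\mathbb{R}^{2n}}(1+|z|)^{2\mu}\,dz<\infty$. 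Second, the harmonic oscillator $\widehat{H}=\operatorname*{Op}\nolimits_{\mathrm{W}}(1+|z|^{2})$ is an elliptic, positive, invertible element of $\operatorname{Op}\Gamma_{1}^{2}$ whose real powers $\widehat{H}^{s}$ are again pseudodifferential, with Weyl symbol in $\Gamma_{1}^{2s}$, and composition is order-additive, $\operatorname{Op}\Gamma_{\delta}^{m_{1}}\circ\operatorname{Op}\Gamma_{\delta}^{m_{2}}\subset\operatorname{Op}\Gamma_{\delta}^{m_{1}+m_{2}}$.

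With these in hand the argument reduces to bookkeeping of orders. Inserting $\widehat{H}^{s}\widehat{H}^{-s}=I$ I would write
\begin{equation}
\widehat{A}=\big(\widehat{A}\,\widehat{H}^{s}\big)\big(\widehat{H}^{-s}\big).\label{plan-fact}
\end{equation}
The left factor has symbol in $\Gamma_{1}^{m+2s}$ and the right factor symbol in $\Gamma_{1}^{-2s}$, so by the first fact both are Hilbert--Schmidt as soon as $m+2s<-n$ and $-2s<-n$, that is $n/2<s<-(n+m)/2$. This window of admissible exponents $s$ is nonempty exactly when $n/2<-(n+m)/2$, i.e. when $m<-2n$ --- the hypothesis, reassuringly used in a sharp way. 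Fixing any such $s$, both factors in (\ref{plan-fact}) are Hilbert--Schmidt, hence their product $\widehat{A}$ is trace class; together with the first paragraph this completes the proof.

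The genuine input, and hence the only hard part, is the Shubin symbolic calculus behind the two facts above: that the real powers of the elliptic operator $\widehat{H}$ remain pseudodifferential of the expected order, and that composition adds orders. Establishing these from scratch --- ellipticity, parametrices, asymptotic symbol expansions, and the construction of complex powers --- is the substance of the theory, but since the proposition is framed within the Shubin classes these results may simply be quoted from \cite{sh87}; once they are granted, no estimate in the proof is delicate.
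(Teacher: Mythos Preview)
The paper does not actually supply a proof of this proposition: it states the result and immediately refers the reader to \cite{Birkbis}, \S 1.2.3, for a discussion of trace formulas. So there is no in-paper argument to compare against; your task amounted to reconstructing the standard proof, and you have done so correctly.

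Your reduction --- the trace identity follows from Proposition~\ref{propduwong} once trace class and $a\in L^{1}$ are known, and $L^{1}$ is immediate from the $\alpha=0$ bound in (\ref{est1}) --- is exactly the right structure. The factorisation $\widehat{A}=(\widehat{A}\widehat{H}^{s})(\widehat{H}^{-s})$ through powers of the harmonic oscillator, with the window $n/2<s<-(n+m)/2$ forced nonempty precisely by $m<-2n$, is the classical Shubin argument (\cite{sh87}, \S 27) and is what the paper implicitly relies on. One cosmetic slip: the left factor $\widehat{A}\widehat{H}^{s}$ has symbol in $\Gamma_{\delta}^{m+2s}$, not $\Gamma_{1}^{m+2s}$, since $a\in\Gamma_{\delta}^{m}$ and $\Gamma_{1}^{2s}\subset\Gamma_{\delta}^{2s}$; this is harmless because only the order, not the step $\delta$, enters the Hilbert--Schmidt criterion $\mu<-n$.
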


See \cite{Birkbis}, \S 1.2.3, for a discussion of various trace formulas
occurring in the literature.

Gaussian functions of the type%
\begin{equation}
\rho(z)=\left(  \tfrac{1}{2\pi}\right)  ^{n}(\det\Sigma)^{-1/2}e^{-\frac{1}%
{2}\Sigma^{-1}z\cdot z} \label{Gaussian}%
\end{equation}
where $\Sigma$ is a real positive definite $2n\times2n$ matrix clearly satisfy
the conditions in Proposition \ref{PropTrace} and we have
\begin{equation}
\int_{\mathbb{R}^{2n}}\rho(z)dz=1. \label{norm}%
\end{equation}
However, the operator $\widehat{\rho}=(2\pi\hbar)^{n}\operatorname*{Op}%
\nolimits_{\mathrm{W}}(\rho)$ does not qualify as a density operator unless
the matrix $\Sigma$ satisfies the condition%
\begin{equation}
\Sigma+\frac{i\hbar}{2}J\geq0 \label{narcow1}%
\end{equation}
where \textquotedblleft\ $\geq0$\textquotedblright\ stands for
\textquotedblleft positive semi-definite\textquotedblright; this condition
ensures the positivity of $\widehat{\rho}$ \cite{Birkbis,Narcowich}. It can be
shown \cite{Birkbis} that condition (\ref{narcow1}) is a symplectically
invariant reformulation of the uncertainty principle of quantum mechanics
\cite{Birkbis,QHA}.

\subsection{Anti-Wick operators\label{subsec12}}

There are several ways to define anti-Wick operators; in \cite{sh87} Shubin
uses the following definition: given a symbol $a\in\mathcal{S}(\mathbb{R}%
^{n})$ the associated anti-Wick operator $\widehat{A}_{\mathrm{AW}%
}=\operatorname*{Op}\nolimits_{\mathrm{AW}}(a)$ is, by definition,
\begin{equation}
\widehat{A}_{\mathrm{AW}}=\int_{\mathbb{R}^{2n}}a(z)\widehat{\Pi}_{0}(z)dz
\label{awick1}%
\end{equation}
where $\widehat{\Pi}_{0}(z):L^{2}(\mathbb{R}^{n})\longrightarrow
L^{2}(\mathbb{R}^{n})$ is the orthogonal projection onto the ray generated by
$\widehat{T}(z)\phi_{0}$ where $\phi_{0}$ the standard Gaussian%
\begin{equation}
\phi_{0}(x)=(\pi\hbar)^{-n/4}e^{-|x|^{2}/2\hbar}. \label{standard}%
\end{equation}
This action of this projection is explicitly given by%
\begin{equation}
\widehat{\Pi}_{0}(z)\psi=(\psi|\widehat{T}(z)\phi_{0})_{L^{2}}\widehat{T}%
(z)\phi_{0} \label{piz}%
\end{equation}
and hence the operator $\widehat{A}_{\mathrm{AW}}$ is given by%
\begin{equation}
\widehat{A}_{\mathrm{AW}}\psi=\int_{\mathbb{R}^{2n}}a(z)(\psi|\widehat{T}%
(z)\phi_{0})_{L^{2}}\widehat{T}(z)\phi_{0}. \label{awick0}%
\end{equation}
We observe that (\cite{Birkbis}, \S 11.4.1) $(\psi|\widehat{T}(z)\phi
_{0})_{L^{2}}$ is, up to a factor, the radar ambiguity transform \cite{Gro} of
the pair $(\psi,\phi_{0})$; in fact
\begin{equation}
(\psi|\widehat{T}(z)\phi_{0})_{L^{2}}=(2\pi\hbar)^{n}\operatorname{Amb}%
(\psi,\phi_{0}) \label{psid}%
\end{equation}
where, by definition,
\begin{equation}
\operatorname{Amb}(\psi,\phi_{0})(z)=\left(  \tfrac{1}{2\pi\hbar}\right)
^{n}\int_{\mathbb{R}^{n}}{}^{-\tfrac{i}{\hbar}py}\psi(y+\tfrac{1}%
{2}x)\overline{\phi_{0}(y-\tfrac{1}{2}x)}dy. \label{crossamb}%
\end{equation}
Formula (\ref{awick0}) can thus be rewritten%
\begin{equation}
\widehat{A}_{\mathrm{AW}}\psi=(2\pi\hbar)^{n}\int_{\mathbb{R}^{2n}%
}a(z)\operatorname{Amb}(\psi,\phi_{0})(z)\widehat{T}(z)\phi_{0}.
\label{awick2}%
\end{equation}
Recall (\cite{Birkbis}, \S 9.3) the following simple relation between
$\operatorname{Amb}(\psi,\phi)$ and the cross-Wigner transform:%
\begin{equation}
\operatorname{Amb}(\psi,\phi)(z)=2^{-n}W(\psi,\phi^{\vee})(\tfrac{1}{2}z)
\label{ambwig}%
\end{equation}
where $\phi^{\vee}(x)=\phi(-x)$. It follows that (\ref{psid}) can be
rewritten, since $\phi_{0}^{\vee}=\phi_{0}$,
\begin{equation}
(\psi|\widehat{T}(z)\phi_{0})_{L^{2}}=(\pi\hbar)^{n}W(\psi,\phi_{0})(\tfrac
{1}{2}z)~ \label{psidbis}%
\end{equation}
so that we also have%
\begin{equation}
\widehat{A}_{\mathrm{AW}}\psi=(\pi\hbar)^{n}\int_{\mathbb{R}^{2n}}%
a(z)W(\psi,\phi_{0})(\tfrac{1}{2}z)\widehat{T}(z)\phi_{0}. \label{awick3}%
\end{equation}

The following characterization in terms of Weyl operators is often taken as a
definition of anti-Wick quantization:

\begin{proposition}
\label{Prop2}Let $a\in L^{1}(\mathbb{R}^{n})$. The Weyl symbol $b$ of the
operator $\widehat{A}=\operatorname*{Op}\nolimits_{\mathrm{AW}}(a)$ is%
\begin{equation}
b=(2\pi\hbar)^{n}W\phi_{0}\ast a \label{wwsymbol0}%
\end{equation}
that is
\begin{equation}
b(z)=2^{n}\int_{\mathbb{R}^{2n}}e^{-\frac{i}{\hbar}|z-z^{\prime}|^{2}%
}a(z^{\prime})dz^{\prime}. \label{wwsymbol}%
\end{equation}

\end{proposition}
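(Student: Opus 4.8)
The plan is to reduce the claim to a single structural fact — that a rank-one orthogonal projection is a pure-state Weyl operator whose symbol is the associated Wigner transform — and then to exploit the linearity of the Weyl correspondence together with the translation covariance of the Wigner transform. Throughout I abbreviate $\eta_z=\widehat{T}(z)\phi_0$, so that $\widehat{\Pi}_0(z)\psi=(\psi|\eta_z)_{L^2}\eta_z$ by (\ref{piz}).

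First I would compute the Weyl symbol of each elementary projection $\widehat{\Pi}_0(z)$ entering the defining integral (\ref{awick1}). Its distributional kernel is $K_z(x,y)=\eta_z(x)\overline{\eta_z(y)}$, so the inversion formula (\ref{axp}) expresses the symbol as $\int_{\mathbb{R}^n}e^{-\frac{i}{\hbar}py}\eta_z(x+\tfrac12 y)\overline{\eta_z(x-\tfrac12 y)}\,dy$. Matching this against the definition (\ref{CrossW}) of the cross-Wigner transform (with both entries equal to $\eta_z$) identifies the symbol as $(2\pi\hbar)^n W\eta_z$. I would then invoke the translation covariance of the Wigner transform, $W(\widehat{T}(z)\phi_0)(w)=W\phi_0(w-z)$ — the same property already used in the introduction to pass from (\ref{rhofila}) to the convolution formula — so that the Weyl symbol of $\widehat{\Pi}_0(z)$, viewed as a function of the phase-space variable $w$, is exactly $(2\pi\hbar)^n W\phi_0(w-z)$.

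Passing the (linear) symbol map through the operator-valued integral (\ref{awick1}) then gives
\[
b(w)=(2\pi\hbar)^n\int_{\mathbb{R}^{2n}}a(z)\,W\phi_0(w-z)\,dz=(2\pi\hbar)^n(W\phi_0\ast a)(w),
\]
which is precisely (\ref{wwsymbol0}); inserting the explicit Gaussian value $W\phi_0(z)=(\pi\hbar)^{-n}e^{-|z|^2/\hbar}$ and using $(2\pi\hbar)^n(\pi\hbar)^{-n}=2^n$ yields the closed form (\ref{wwsymbol}).

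The one step requiring genuine care is the interchange of the symbol map with the integral $\int a(z)\widehat{\Pi}_0(z)\,dz$, since the Weyl symbol map is not a bounded scalar functional and the exchange is not a priori legitimate. I would make this rigorous at the level of matrix elements via the weak definition (\ref{defA}): starting from (\ref{awick0}) I have $\langle\widehat{A}_{\mathrm{AW}}\psi,\overline{\phi}\rangle=\int_{\mathbb{R}^{2n}}a(z)(\psi|\eta_z)_{L^2}(\eta_z|\phi)_{L^2}\,dz$, and the per-$z$ symbol computation above shows $(\psi|\eta_z)_{L^2}(\eta_z|\phi)_{L^2}=\langle\langle(2\pi\hbar)^n W\phi_0(\cdot-z),W(\psi,\phi)\rangle\rangle$. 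Applying Fubini — justified because $a\in L^1(\mathbb{R}^{2n})$, $W\phi_0\in\mathcal{S}(\mathbb{R}^{2n})$, and $W(\psi,\phi)\in\mathcal{S}(\mathbb{R}^{2n})$ for $\psi,\phi\in\mathcal{S}(\mathbb{R}^n)$ — produces $\langle\langle(2\pi\hbar)^n(W\phi_0\ast a),W(\psi,\phi)\rangle\rangle$, and comparison with (\ref{defA}) identifies $b=(2\pi\hbar)^n W\phi_0\ast a$. This Fubini justification is the technical crux; the remaining manipulations are the routine covariance and Gaussian computations indicated above.
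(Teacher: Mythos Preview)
Your proof is correct and follows essentially the same approach as the paper: compute the Weyl symbol of each projection $\widehat{\Pi}_0(z)$ as $(2\pi\hbar)^n W\phi_0(\cdot-z)$ via translation covariance, then pass to matrix elements and apply Fubini to identify $b=(2\pi\hbar)^n(W\phi_0\ast a)$, finishing with the explicit Gaussian (\ref{wifo}). Your justification of the Fubini step is in fact more explicit than the paper's, but the overall structure is the same.
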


\begin{proof}
Let $\pi(z_{0})$ be the Weyl symbol of the orthogonal projection (\ref{piz});
we thus have,in view of definition (\ref{defA}),
\[
(\widehat{\Pi}_{0}(z)\psi|\phi)_{L^{2}}=\int_{\mathbb{R}^{2n}}\pi
(z_{0})(z^{\prime})W(\psi,\phi)(z^{\prime})dz^{\prime}%
\]
for all $\psi,\phi\in\mathcal{S}(\mathbb{R}^{n})$ (see \textit{e.g}.
\cite{Birkbis}, \S 10.1.2); the translational covariance of the Wigner
transform (\cite{Birkbis}, \S 9.2.2) that is, we have,
\[
\pi(z_{0})(z^{\prime})=(2\pi\hbar)^{n}W(\widehat{T}(z)\phi_{0})(z^{\prime
})=(2\pi\hbar)^{n}W\phi_{0}(z-z^{\prime})
\]
and hence%
\[
(\widehat{\Pi}_{0}(z)\psi|\phi)_{L^{2}}=(2\pi\hbar)^{n}\int_{\mathbb{R}^{2n}%
}W\phi_{0}(z-z^{\prime})W(\psi,\phi)(z^{\prime})dz^{\prime}~.
\]
Using Fubini's theorem we get for $\psi,\phi\in\mathcal{S}(\mathbb{R}^{n})$%
\begin{align*}
(\widehat{A}\psi|\phi)_{L^{2}}  &  =\int_{\mathbb{R}^{2n}}a(z)(\widehat{\Pi
}_{0}(z)\psi|\phi)_{L^{2}}dz\\
&  =(2\pi\hbar)^{n}\int_{\mathbb{R}^{2n}}a(z)\left[  \int_{\mathbb{R}^{2n}%
}W\phi_{0}(z-z^{\prime})W(\psi,\phi)(z^{\prime})dz^{\prime}\right]  dz\\
&  =(2\pi\hbar)^{n}\int_{\mathbb{R}^{2n}}\left[  \int_{\mathbb{R}^{2n}%
}a(z)W\phi_{0}(z-z^{\prime})dz\right]  W(\psi,\phi)(z^{\prime})dz^{\prime}%
\end{align*}
hence the Weyl symbol of $\widehat{A}$ is
\[
b(z)=(2\pi\hbar)^{n}\int_{\mathbb{R}^{2n}}a(z)W\phi_{0}(z-z^{\prime
})dz^{\prime}=(2\pi\hbar)^{n}(a\ast W\phi_{0})(z)..
\]
Formula (\ref{wwsymbol}) follows in view of the identity \cite{Bast,Birkbis}%
\begin{equation}
W\phi_{0}(z)=(\pi\hbar)^{-n}e^{-\frac{1}{\hbar}|z|^{2}}~. \label{wifo}%
\end{equation}

\end{proof}

\begin{remark}
Note that it follows from formula (\ref{wwsymbol}) that the Weyl symbol $b$ is
a real analytic function, hence we cannot expect an arbitrary Weyl operator to
be an anti-Wick operator \cite{sh87}.
\end{remark}

\subsection{The Feichtinger algebra and its dual\label{subsec13}}

The Feichtinger algebra and its dual are the simplest examples of modulation
spaces. They were introduced in the early 1980's by H. Feichtinger
\cite{Hans1,Hans2}, and have since played an increasingly important role in
time-frequency analysis and in Gabor theory; for a full textbook treatment see
Gr\"{o}chenig's treatise \cite{Gro}; in \cite{Jakob} Jakobsen gives an
up-to-date review of the Feichtinger algebra. Modulation spaces were
originally defined in terms of the short-time Fourier transform (or Gabor
transform) widely used in time-frequency analysis; we have redefined them
in\ \cite{Birkbis} in terms of the cross-Wigner transform, which is more
flexible, and has the indisputable advantage that the metaplectic invariance
of modulation spaces becomes immediately obvious. We are following here our
treatment in \cite{Birkbis}, Chapter 16 and 17.

By definition the Feichtinger's algebra $M^{1}(\mathbb{R}^{n})$ (sometimes
denoted $S_{0}(\mathbb{R}^{n})$) consists of all distributions $\psi
\in\mathcal{S}^{\prime}(\mathbb{R}^{n})$ such that $W(\psi,\phi)\in
L^{1}(\mathbb{R}^{2n})$ for \textit{some} window $\phi\in\mathcal{S}%
(\mathbb{R}^{n})$; when this is the case we have $W(\psi,\phi)\in
L^{1}(\mathbb{R}^{2n})$ for \textit{all} windows $\phi\in\mathcal{S}%
(\mathbb{R}^{n})$ and the formula
\begin{equation}
||\psi||_{\phi}=\int_{\mathbb{R}^{2n}}|W(\psi,\phi)(z)|dz=||W(\psi
,\phi)||_{L^{1}} \label{normphi}%
\end{equation}
defines a norm on the vector space $M^{1}(\mathbb{R}^{n})$; another choice of
window $\phi^{\prime}$ the leads to an equivalent norm and one shows that
$M^{1}(\mathbb{R}^{n})$ is a Banach space for the topology thus defined. We
have the following continuous inclusions:
\begin{equation}
M^{1}(\mathbb{R}^{n})\subset C^{0}(\mathbb{R}^{n})\cap L^{2}(\mathbb{R}%
^{n})\cap L^{1}(\mathbb{R}^{n})\cap FL^{1}(\mathbb{R}^{n}) \label{inclusions}%
\end{equation}
and $\mathcal{S}(\mathbb{R}^{n})$ is dense in $M^{1}(\mathbb{R}^{n})$.
Moreover, $\psi\in L^{2}(\mathbb{R}^{n})$ is in $M^{1}(\mathbb{R}^{n})$ if and
only if $||W\psi||_{L^{1}}<\infty$.

Let $\widehat{S}\in\operatorname*{Mp}(n)$ (the metaplectic group) cover
$S\in\operatorname*{Sp}(n)$; then $W(\widehat{S}\psi,\widehat{S}\phi
)=W(\psi,\phi)\circ S^{-1}$; it follows from this covariance formula and the
fact that the choice of window $\phi$ is irrelevant, that $\widehat{S}\psi\in
M^{1}(\mathbb{R}^{n})$ if and only if $S\psi\in M^{1}(\mathbb{R}^{n})$
(metaplectic invariance $M^{1}(\mathbb{R}^{n})$). It follows in particular
that $M^{1}(\mathbb{R}^{n})$ is invariant by Fourier transform, so it follows
from the second inclusion (\ref{inclusions}) that we have
\begin{equation}
M^{1}(\mathbb{R}^{n})\subset FL^{1}(\mathbb{R}^{n})\cap L^{1}(\mathbb{R}^{n}).
\label{111}%
\end{equation}
As a consequence, using the Riemann--Lebesgue lemma, every $\psi\in
M^{1}(\mathbb{R}^{n})$ is bounded and vanishes at infinity. It also follows
from the metaplectic invariance property that $M^{1}(\mathbb{R}^{n})$ is
stable under linear changes of variables: suppose $L\in GL(n,\mathbb{R})$;
then the operator $\widehat{M}_{L,m}$ defined by $\widehat{M}_{L,m}%
\psi(x)=i^{m}\sqrt{\det L}\psi(Lx)$ for a choice of $m$ $\operatorname{mod}4$
corresponding to the argument of $\det L$ is in $\operatorname*{Mp}(n)$; if
$\psi\in M^{1}(\mathbb{R}^{n})$ we have $\widehat{M}_{L,m}\psi\in
M^{1}(\mathbb{R}^{n})$.

It turns out that $M^{1}(\mathbb{R}^{n})$ is in addition an algebra for both
pointwise product and convolution; in fact if $\psi,\psi^{\prime}\in
M^{1}(\mathbb{R}^{n})$ then $||\psi\ast\psi^{\prime}||_{\phi}\leq
||\psi||_{L^{1}}||\psi^{\prime}||_{\phi}$ so we also have
\begin{equation}
M^{1}(\mathbb{R}^{n})\ast M^{1}(\mathbb{R}^{n})\subset M^{1}(\mathbb{R}^{n}).
\label{m1m1l1}%
\end{equation}
Taking Fourier transforms we conclude that $M^{1}(\mathbb{R}^{n})$ is also
closed under pointwise product.

Replacing $\mathbb{R}^{n}$ with $\mathbb{R}^{2n}$ elements of the Feichtinger
algebra can be viewed as pseudodifferential symbols; the following result was
proven by Gr\"{o}chenig in \cite{Gro96} (Theorem 3); also see Gr\"{o}chenig
and Heil \cite{GH04} or Cordero and Gr\"{o}chenig \cite{cogr0}; it is the
announced refinement of Proposition \ref{propduwong}:

\begin{proposition}
\label{Prop4}Let $a\in M^{1}(\mathbb{R}^{2n})$, then $\widehat{A}%
=\operatorname*{Op}\nolimits_{\mathrm{W}}(a)$ is of trace class and we have%
\[
\operatorname*{Tr}\widehat{A}=\left(  \tfrac{1}{2\pi\hbar}\right)  ^{n}%
\int_{\mathbb{R}^{2n}}a(z)dz.
\]

\end{proposition}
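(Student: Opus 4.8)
The plan is to prove the two assertions separately: first that $\widehat{A}$ is of trace class, and then that its trace is given by the stated integral. Only the first part requires genuine work, because once $\widehat{A}$ is known to be trace class we may invoke Proposition \ref{propduwong} directly: the inclusion (\ref{inclusions}) (with $n$ replaced by $2n$) gives $M^{1}(\mathbb{R}^{2n})\subset L^{1}(\mathbb{R}^{2n})$, so $a\in L^{1}$ and (\ref{duwong1}) immediately yields $\operatorname{Tr}(\widehat{A})=(2\pi\hbar)^{-n}\int a(z)\,dz$. Thus the entire difficulty lies in the trace-class property, and the natural route is to prove the stronger quantitative statement that $\operatorname{Op}_{\mathrm{W}}$ maps $M^{1}(\mathbb{R}^{2n})$ boundedly into the trace-class ideal.

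The building block is the elementary observation that rank-one operators are the Weyl quantizations of cross-Wigner transforms. Comparing the kernel formulas (\ref{kxy})--(\ref{axp}) with the definition (\ref{CrossW}), one sees that the operator $\psi\otimes\overline{\phi}\colon\xi\mapsto(\xi|\phi)_{L^{2}}\psi$, whose kernel is $\psi(x)\overline{\phi(y)}$, satisfies $\operatorname{Op}_{\mathrm{W}}\big((2\pi\hbar)^{n}W(\psi,\phi)\big)=\psi\otimes\overline{\phi}$, an operator of trace norm $\|\psi\|_{L^{2}}\|\phi\|_{L^{2}}$. In particular, taking as a window on phase space the Gaussian $g=(2\pi\hbar)^{n}W\phi_{0}=2^{n}e^{-|z|^{2}/\hbar}\in\mathcal{S}(\mathbb{R}^{2n})$ (using (\ref{wifo})), we obtain $\operatorname{Op}_{\mathrm{W}}(g)=\widehat{\Pi}_{\phi_{0}}$, the rank-one projection of trace norm $1$.

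I would then expand the symbol by the short-time Fourier inversion formula $a=\|g\|_{L^{2}}^{-2}\int_{\mathbb{R}^{4n}}V_{g}a(w)\,\pi(w)g\,dw$, where $\pi(w)$ are the time-frequency shifts on $\mathbb{R}^{2n}$ and $V_{g}a$ is the short-time Fourier transform of $a$; since $a\in M^{1}(\mathbb{R}^{2n})$ and $g$ is a Schwartz window, $V_{g}a\in L^{1}(\mathbb{R}^{4n})$ (this is, up to the standard chirp change of variables relating the short-time Fourier transform to the cross-Wigner transform, precisely the defining property of the Feichtinger algebra). Applying $\operatorname{Op}_{\mathrm{W}}$ and interchanging it with the integral gives $\widehat{A}=\|g\|_{L^{2}}^{-2}\int V_{g}a(w)\,\operatorname{Op}_{\mathrm{W}}(\pi(w)g)\,dw$. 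The key point is that each building block has trace norm $1$: by the displacement covariance of the cross-Wigner transform, every time-frequency shift $\pi(w)g$ of $g=(2\pi\hbar)^{n}W(\phi_{0},\phi_{0})$ equals, up to a unimodular factor, $(2\pi\hbar)^{n}W(\widehat{T}(\alpha_{w})\phi_{0},\widehat{T}(\beta_{w})\phi_{0})$ for suitable $\alpha_{w},\beta_{w}\in\mathbb{R}^{2n}$, so by the rank-one identity $\operatorname{Op}_{\mathrm{W}}(\pi(w)g)$ is a unimodular multiple of $(\widehat{T}(\alpha_{w})\phi_{0})\otimes\overline{\widehat{T}(\beta_{w})\phi_{0}}$, of trace norm $\|\widehat{T}(\alpha_{w})\phi_{0}\|_{L^{2}}\|\widehat{T}(\beta_{w})\phi_{0}\|_{L^{2}}=1$ since $\widehat{T}$ is unitary. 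The triangle inequality for the Bochner integral in the trace class then yields $\|\widehat{A}\|_{\mathrm{tr}}\leq\|g\|_{L^{2}}^{-2}\|V_{g}a\|_{L^{1}}<\infty$.

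The main obstacle is this last step. Two technical matters must be handled with care: justifying that $\operatorname{Op}_{\mathrm{W}}$ may be pulled inside the integral and that the resulting vector-valued integral converges in the trace norm (this follows from the uniform bound $\|\operatorname{Op}_{\mathrm{W}}(\pi(w)g)\|_{\mathrm{tr}}=1$, the integrability of $V_{g}a$, and the completeness of the trace-class ideal), and verifying the covariance identity that realizes every time-frequency shift of $W\phi_{0}$ as a cross-Wigner transform of a pair of displaced Gaussians (the displacements $\alpha_{w},\beta_{w}$ range over all of $\mathbb{R}^{4n}$ because translation and phase-space modulation of $W(\phi_{0},\phi_{0})$ are produced by the average and difference of the two displacements). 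Once the trace-class property is secured, the trace formula is, as noted, an immediate consequence of Proposition \ref{propduwong} together with $M^{1}(\mathbb{R}^{2n})\subset L^{1}(\mathbb{R}^{2n})$.
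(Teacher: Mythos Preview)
The paper does not actually prove Proposition~\ref{Prop4}: it is stated as a result ``proven by Gr\"{o}chenig in \cite{Gro96} (Theorem 3)'' and simply cited, with further references to \cite{GH04} and \cite{cogr0}. So there is no in-paper argument to compare against.

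Your proposal is essentially a reconstruction of the standard Gr\"{o}chenig--Heil proof, and the outline is sound. The two substantive ingredients are exactly the ones you identify: (i) the rank-one identity $\operatorname{Op}_{\mathrm{W}}\big((2\pi\hbar)^{n}W(\psi,\phi)\big)=\psi\otimes\overline{\phi}$, and (ii) the covariance fact that every phase-space time-frequency shift of $W\phi_{0}$ is, up to a unimodular factor, of the form $W(\widehat{T}(z_{1})\phi_{0},\widehat{T}(z_{2})\phi_{0})$. For (ii) the precise identity (in the conventions of \cite{Birkbis}) is
\[
W(\widehat{T}(z_{1})\psi,\widehat{T}(z_{2})\phi)(z)=e^{-\tfrac{i}{\hbar}\sigma(z_{1},z_{2})}\,e^{\tfrac{2i}{\hbar}\sigma(z,\tfrac{1}{2}(z_{1}-z_{2}))}\,W(\psi,\phi)\big(z-\tfrac{1}{2}(z_{1}+z_{2})\big),
\]
and since $(z_{1},z_{2})\mapsto\big(\tfrac{1}{2}(z_{1}+z_{2}),\,J(z_{1}-z_{2})\big)$ is a linear bijection of $\mathbb{R}^{4n}$, the displacements indeed parametrize all time-frequency shifts of $W\phi_{0}$, as you claim. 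The Bochner-integral step is then routine: $w\mapsto\operatorname{Op}_{\mathrm{W}}(\pi(w)g)$ is weakly (hence strongly) measurable into the separable trace-class ideal, has constant trace norm $1$, and $V_{g}a\in L^{1}$, so the integral converges in trace norm and the bound $\|\widehat{A}\|_{\mathrm{tr}}\leq\|g\|_{L^{2}}^{-2}\|V_{g}a\|_{L^{1}}$ follows. Your reduction of the trace formula itself to Proposition~\ref{propduwong} via $M^{1}(\mathbb{R}^{2n})\subset L^{1}(\mathbb{R}^{2n})$ is correct and is the cleanest way to finish.
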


The dual Banach space of the Feichtinger algebra $M^{1}(\mathbb{R}^{n})$ is
denoted by $M^{\infty}(\mathbb{R}^{n})$; it is the space of tempered
distributions consisting of all $\psi\in S^{\prime}(\mathbb{R}^{n})$ such that
$W(\psi,\phi)\in L^{\infty}(\mathbb{R}^{2n})$ for one (and hence all) windows
$\phi\in M^{1}(\mathbb{R}^{n})$. The duality bracket is given by the pairing
\begin{equation}
(\psi,\psi^{\prime})=\int_{\mathbb{R}^{2n}}W(\psi,\phi)(z)\overline
{W(\psi^{\prime},\phi)(z)}dz.
\end{equation}
(It follows from the fact that $L^{\infty}(\mathbb{R}^{2n})$ is the dual space
of $L^{1}(\mathbb{R}^{2n})$; see \cite{Gro}, \S 11.3.) The formula
\[
||\psi||_{\phi}^{\prime}=\sup_{z\in\mathbb{R}^{2n}}|W(\psi,\phi)(z)|=||W(\psi
,\phi)||_{L^{\infty}}<\infty
\]
defines a norm on $M^{\infty}(\mathbb{R}^{n})$. Since $M^{1}(\mathbb{R}^{n})$
is the smallest Banach space isometrically invariant under the action of the
metaplectic group its dual $M^{\infty}(\mathbb{R}^{n})$ is essentially the
largest space of distributions with this property.

\section{Toeplitz States}

\subsection{Toeplitz operators and their Weyl symbols}

We defined in Section \ref{subsec12} the anti-Wick operator $\widehat{A}%
_{\mathrm{AW}}=\operatorname*{Op}\nolimits_{\mathrm{AW}}(a)$ by
\begin{equation}
\widehat{A}_{\mathrm{AW}}=\int_{\mathbb{R}^{2n}}a(z)\widehat{\Pi}_{0}(z)dz
\end{equation}
where $\widehat{\Pi}_{0}(z)$ is the orthogonal projection on the ray
$\mathbb{C}(\widehat{T}(z)\phi_{0})$ and $\phi_{0}$ is the standard Gaussian
(\ref{standard}). The notion of Toeplitz operator generalizes this definition
to arbitrary windows $\phi\in L^{1}(\mathbb{R}^{n})$:

\begin{definition}
Let $\phi\in M^{1}(\mathbb{R}^{n})$ and $a\in L^{1}(\mathbb{R}^{2n})$. The
Toeplitz operator $\widehat{A}_{\phi}=\operatorname*{Op}\nolimits_{\phi}(a)$
with window $\phi$ and symbol $a$ is%
\begin{equation}
\widehat{A}_{\phi}=\int_{\mathbb{R}^{2n}}a(z)\widehat{\Pi}_{\phi}(z)dz
\label{Toeplitz1}%
\end{equation}
where $\widehat{\Pi}_{\phi}:L^{2}(\mathbb{R}^{n})\longrightarrow
L^{2}(\mathbb{R}^{n})$ is the orthogonal projection onto the ray
$\mathbb{C}(\widehat{T}(z)\phi)$.
\end{definition}

Explicitly, for $\psi\in\mathcal{S}(\mathbb{R}^{n}).$%
\begin{equation}
\widehat{A}_{\phi}\psi=\int_{\mathbb{R}^{2n}}a(z)(\psi|\widehat{T}(z_{0}%
)\phi)_{L^{2}}\widehat{T}(z_{0})\phi dz_{0}. \label{Toeplitz2}%
\end{equation}

We will discuss the convergence of the integral (\ref{Toeplitz2}) in a moment,
but we first note that in view of the formulas (\ref{crossamb}) and
(\ref{ambwig}) we can\ rewrite the definition (\ref{Toeplitz1}) in the two
equivalent forms
\begin{align}
\widehat{A}_{\phi}\psi &  =(2\pi\hbar)^{n}\int_{\mathbb{R}^{2n}}%
a(z)\operatorname{Amb}(\psi,\phi)(z)\widehat{T}(z)\phi dz\label{MdG1}\\
\widehat{A}_{\phi}\psi &  =(\pi\hbar)^{n}\int_{\mathbb{R}^{2n}}a(z)W(\psi
,\phi^{\vee})(\tfrac{1}{2}z)\widehat{T}(z)\phi dz; \label{MdG2}%
\end{align}
the second relation is essentially the definition of the single-windowed
Toeplitz (or localization) operators given in the time-frequency analysis
literature (see \textit{e.g.} \cite{cogr,coro,cogr0,Toft,togr}).

The following statement is the analogue of Proposition \ref{Prop2} in the
framework of Toeplitz operators:

\begin{proposition}
\label{Prop3}Let $\phi\in M^{1}(\mathbb{R}^{n})$ and $a\in L^{1}%
(\mathbb{R}^{2n})$. The Toeplitz operator
\[
\widehat{A}_{\phi}=\operatorname*{Op}\nolimits_{\phi}(a)=\int_{\mathbb{R}%
^{2n}}a(z_{0})\widehat{\Pi}_{\phi}(z_{0})dz_{0}%
\]
has Weyl symbol
\begin{equation}
a_{\phi}=(2\pi\hbar)^{n}(a\ast W\phi), \label{WeylToep}%
\end{equation}
that is,
\begin{equation}
\widehat{A}_{\phi}=(2\pi\hbar)^{n}\operatorname*{Op}\nolimits_{\mathrm{W}%
}(a\ast W\phi). \label{Toeplitz3}%
\end{equation}

\end{proposition}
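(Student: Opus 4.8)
The plan is to follow almost verbatim the strategy used to prove Proposition \ref{Prop2}, the only change being that the fixed Gaussian window $\phi_0$ is replaced by an arbitrary window $\phi\in M^1(\mathbb{R}^n)$. The idea is to compute the Weyl symbol of each elementary projection $\widehat{\Pi}_\phi(z_0)$ occurring in (\ref{Toeplitz1}) and then to integrate these symbols against $a(z_0)$, interchanging the order of integration at the end.

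First I would determine the Weyl symbol $\pi_\phi(z_0)$ of the rank-one orthogonal projection $\widehat{\Pi}_\phi(z_0)$ onto $\mathbb{C}(\widehat{T}(z_0)\phi)$. Exactly as in the proof of Proposition \ref{Prop2}, the Weyl symbol of the projection onto the ray generated by a unit vector $\eta$ is $(2\pi\hbar)^n W\eta$ (this is the fact, recalled in the Introduction, that a pure state $\widehat{\Pi}_\eta$ has Weyl symbol $(2\pi\hbar)^n W\eta$). Taking $\eta=\widehat{T}(z_0)\phi$ and invoking the translational covariance of the Wigner transform, $W(\widehat{T}(z_0)\phi)(z)=W\phi(z-z_0)$, I obtain
\[
\pi_\phi(z_0)(z)=(2\pi\hbar)^n W\phi(z-z_0).
\]

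Next, for test functions $\psi,\chi\in\mathcal{S}(\mathbb{R}^n)$ I would write, using the weak form of (\ref{Toeplitz1}),
\[
(\widehat{A}_\phi\psi|\chi)_{L^2}=\int_{\mathbb{R}^{2n}}a(z_0)(\widehat{\Pi}_\phi(z_0)\psi|\chi)_{L^2}\,dz_0,
\]
and then express the inner product through the defining relation (\ref{defA}) as $(\widehat{\Pi}_\phi(z_0)\psi|\chi)_{L^2}=\int_{\mathbb{R}^{2n}}\pi_\phi(z_0)(z')W(\psi,\chi)(z')\,dz'$. Substituting the formula for $\pi_\phi(z_0)$ and applying Fubini's theorem to interchange the $z_0$- and $z'$-integrations yields
\[
(\widehat{A}_\phi\psi|\chi)_{L^2}=(2\pi\hbar)^n\int_{\mathbb{R}^{2n}}(a\ast W\phi)(z')\,W(\psi,\chi)(z')\,dz'.
\]
Comparing this with (\ref{defA}) identifies the Weyl symbol of $\widehat{A}_\phi$ as $(2\pi\hbar)^n(a\ast W\phi)$, which is precisely (\ref{WeylToep}) and hence (\ref{Toeplitz3}).

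The only genuinely delicate point — the step I expect to require the most care — is the legitimacy of the interchange of integrations and the absolute convergence of every integral involved. This is exactly where the hypotheses enter: the assumption $\phi\in M^1(\mathbb{R}^n)$ guarantees $W\phi\in L^1(\mathbb{R}^{2n})$ (by the characterization that $\psi\in M^1$ if and only if $\|W\psi\|_{L^1}<\infty$), and together with $a\in L^1(\mathbb{R}^{2n})$ Young's inequality gives $a\ast W\phi\in L^1(\mathbb{R}^{2n})$. Since $W(\psi,\chi)$ is a Schwartz function whenever $\psi,\chi\in\mathcal{S}(\mathbb{R}^n)$, all the double integrals above converge absolutely; this both justifies the use of Fubini's theorem and confirms that $\widehat{A}_\phi$ is a genuine Weyl operator with integrable symbol.
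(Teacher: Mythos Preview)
Your proposal is correct and follows essentially the same route as the paper: compute the Weyl symbol of each projection $\widehat{\Pi}_\phi(z_0)$ as $(2\pi\hbar)^n W\phi(\cdot-z_0)$, insert this into the weak form via (\ref{defA}), and interchange the $z_0$- and $z'$-integrations by Fubini--Tonelli. The only cosmetic difference is that the paper obtains the symbol of the projection by writing its kernel explicitly and applying (\ref{axp}), whereas you invoke the pure-state formula plus translational covariance (exactly as in the proof of Proposition~\ref{Prop2}); your added justification of Fubini via $W\phi\in L^1(\mathbb{R}^{2n})$ and $W(\psi,\chi)\in\mathcal{S}(\mathbb{R}^{2n})$ is a point the paper leaves implicit.
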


\begin{proof}
Let us determine the Weyl symbol $\pi_{\phi}(z_{0})$ of the orthogonal
projection $\widehat{\Pi}_{\phi}(z_{0})$. It is easily seen, using
(\ref{Toeplitz2}), that the kernel of $\widehat{\Pi}_{\phi}(z_{0})$ is the
function
\begin{align*}
K_{\phi}(x,y)  &  =\widehat{T}(z_{0})\phi(x)\overline{\widehat{T}(z_{0}%
)\phi(y)}\\
&  =e^{-\frac{i}{\hbar}p_{0}(y-x)}\phi(x-x_{0})\overline{\phi(y-x_{0})}%
\end{align*}
hence, by formula (\ref{axp}),%
\[
\pi_{\phi}(z_{0})(z)=\int_{\mathbb{R}^{n}}{}^{-\tfrac{i}{\hbar}(p-p_{0})y}%
\phi(x-x_{0}+\tfrac{1}{2}x)\overline{\phi_{0}(y-x_{0}-\tfrac{1}{2}x)}dy
\]
that is
\[
\pi_{\phi}(z_{0})(z)=(2\pi\hbar)^{n}W\phi(z-z_{0}).
\]
It follows, using (\ref{defA}), that
\[
(\widehat{\Pi}_{\phi}(z)\psi|\chi)_{L^{2}}=(2\pi\hbar)^{n}\int_{\mathbb{R}%
^{2n}}W\phi(z-z_{0})W(\psi,\chi)(z_{0})dz_{0}%
\]
for all $\psi,\chi\in\mathcal{S}(\mathbb{R}^{n})$ and hence
\begin{align*}
(\widehat{A}_{\phi}\psi|\chi)_{L^{2}}  &  =\int_{\mathbb{R}^{2n}%
}a(z)(\widehat{\Pi}_{\phi}(z)\psi|\chi)_{L^{2}}dz\\
&  =(2\pi\hbar)^{n}\int_{\mathbb{R}^{2n}}a(z)\left[  \int_{\mathbb{R}^{2n}%
}W\phi(z-z_{0})W(\psi,\chi)(z_{0})dz_{0}\right]  dz.
\end{align*}
Using the Fubini--Tonnelli theorem we get
\begin{align*}
(\widehat{A}_{\phi}\psi|\chi)_{L^{2}}  &  =\int_{\mathbb{R}^{2n}%
}a(z)(\widehat{\Pi}_{\phi}(z)\psi|\chi)_{L^{2}}dz\\
&  =(2\pi\hbar)^{n}\int_{\mathbb{R}^{2n}}a(z)\left[  \int_{\mathbb{R}^{2n}%
}W\phi(z-z_{0})W(\psi,\chi)(z_{0})dz_{0}\right]  dz\\
&  =(2\pi\hbar)^{n}\int_{\mathbb{R}^{2n}}\left[  \int_{\mathbb{R}^{2n}%
}a(z)W\phi(z-z_{0})dz\right]  W(\psi,\chi)(z_{0})dz_{0}%
\end{align*}
hence the Weyl symbol of $\widehat{A}_{\phi}$ is $a_{\phi}=(2\pi\hbar
)^{n}W\phi\ast\mu$, as claimed.
\end{proof}

\subsection{Toeplitz operators as density operators}

The following result characterizes Toeplitz density operators. It is an
extension to the continuous case of the formula (\ref{discrete1}).

\begin{proposition}
\label{PropMU}Let $\mu\in M^{1}(\mathbb{R}^{2n})$ be a probability density:%
\[
\mu\geq0\text{ \ and \ }\int_{\mathbb{R}^{2n}}\mu(z)dz=1.
\]
For every $\phi\in M^{1}(\mathbb{R}^{n})$ such that $||\phi||_{L^{2}}=1$, the
Toeplitz operator%
\begin{equation}
\widehat{\rho}=(2\pi\hbar)^{n}\operatorname*{Op}\nolimits_{\phi}(\mu
)=(2\pi\hbar)^{n}\operatorname*{Op}\nolimits_{\mathrm{W}}(\mu\ast W\phi)
\label{ToepWeyl}%
\end{equation}
is a density operator on $L^{2}(\mathbb{R}^{n})$.
\end{proposition}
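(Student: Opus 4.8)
The plan is to verify the three defining properties of a density operator — positive semi-definiteness, trace class, and unit trace — in turn, exploiting that $\widehat{\rho}$ is simultaneously a superposition of projections and a Weyl operator with a well-behaved symbol.

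First I would establish positivity directly from the integral representation (\ref{Toeplitz1}). For $\psi\in\mathcal{S}(\mathbb{R}^{n})$, the expression (\ref{Toeplitz2}) together with the rank-one structure $(\widehat{\Pi}_{\phi}(z)\psi|\psi)_{L^{2}}=|(\psi|\widehat{T}(z)\phi)_{L^{2}}|^{2}$ yields
\[
(\widehat{\rho}\psi|\psi)_{L^{2}}=(2\pi\hbar)^{n}\int_{\mathbb{R}^{2n}}\mu(z)\,|(\psi|\widehat{T}(z)\phi)_{L^{2}}|^{2}\,dz.
\]
Since $\mu\geq0$, the integrand is nonnegative, so $(\widehat{\rho}\psi|\psi)_{L^{2}}\geq0$ on the dense subspace $\mathcal{S}(\mathbb{R}^{n})$; once boundedness is known (next step) this extends to all of $L^{2}(\mathbb{R}^{n})$. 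The care needed here is to justify pulling the $L^{2}$ pairing inside the integral, which I would do by Fubini after noting that $z\mapsto(\psi|\widehat{T}(z)\phi)_{L^{2}}$ is, up to a constant, a cross-ambiguity (equivalently cross-Wigner) function and hence lies in $L^{2}\cap L^{\infty}$, so the double integral is absolutely convergent against $\mu\in L^{1}$.

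Next I would handle the trace class property and the value of the trace via the Weyl symbol. By Proposition \ref{Prop3} the Weyl symbol of $\widehat{\rho}$ is a constant multiple of $\mu\ast W\phi$. The key structural fact is that $\phi\in M^{1}(\mathbb{R}^{n})$ forces $W\phi\in M^{1}(\mathbb{R}^{2n})$, and then the convolution stability (\ref{m1m1l1}) of the Feichtinger algebra gives $\mu\ast W\phi\in M^{1}(\mathbb{R}^{2n})$. Consequently the Weyl symbol of $\widehat{\rho}$ lies in $M^{1}(\mathbb{R}^{2n})$, so Proposition \ref{Prop4} applies: $\widehat{\rho}$ is of trace class and its trace equals $(2\pi\hbar)^{-n}$ times the integral of its symbol. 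Carrying out that integral and using $\int\mu=1$ together with $\int_{\mathbb{R}^{2n}}W\phi(z)\,dz=\|\phi\|_{L^{2}}^{2}=1$ — the latter being the marginal identity (\ref{marge}) with $\psi=\phi$ — and the factorization of the integral of a convolution, I obtain $\operatorname{Tr}(\widehat{\rho})=1$.

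The main obstacle I anticipate is not any single estimate but the bookkeeping around convergence and normalization: making the defining integral (\ref{Toeplitz1}) rigorous as an operator identity in the weak (or Bochner) sense, justifying the interchanges of integration used in both the positivity and the trace computations, and verifying $W\phi\in M^{1}(\mathbb{R}^{2n})$ so that Proposition \ref{Prop4} is genuinely applicable. Self-adjointness, which is subsumed by positivity, can alternatively be read off from the symbol $\mu\ast W\phi$ being real-valued, since both $\mu$ and $W\phi$ are real.
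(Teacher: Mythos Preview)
Your proposal is correct and follows essentially the same route as the paper: positivity from the integral representation against the rank-one projections, trace class via $W\phi\in M^{1}(\mathbb{R}^{2n})$ and the convolution stability (\ref{m1m1l1}) so that Proposition~\ref{Prop4} applies, and $\operatorname{Tr}(\widehat{\rho})=1$ from $\int\mu=1$ together with (\ref{marge}). The only cosmetic difference is that the paper routes the trace computation through the symplectic Fourier transform at the origin rather than integrating the convolution directly, which amounts to the same calculation.
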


\begin{proof}
The operator $\widehat{\rho}$ is positive semidefinite: by definition
(\ref{Toeplitz2}) we have
\[
(\operatorname*{Op}\nolimits_{\phi}(\mu)\psi|\psi)_{L^{2}}=\int_{\mathbb{R}%
^{2n}}\mu(z)|(\psi|\widehat{T}(z)\phi)_{L^{2}}|^{2}dz
\]
hence $(\operatorname*{Op}\nolimits_{\phi}(\mu)\psi|\psi)_{L^{2}}\geq0$
because $\mu\geq0$ being a probability density. Let us prove that
$\widehat{\rho}$ is of trace class. In view of Proposition \ref{Prop4} it is
sufficient to show that the Weyl symbol $a=(2\pi\hbar)^{n}(\mu\ast W\phi)$ is
in $M^{1}(\mathbb{R}^{2n})$. In view of the convolution algebra property
(\ref{m1m1l1}) of $M^{1}(\mathbb{R}^{2n})$ it is sufficient for this to show
that $M^{1}(\mathbb{R}^{n})$ implies that $W\phi\in M^{1}(\mathbb{R}^{2n})$;
this property is in fact a consequence of a more general result (Prop.2.5 in
\cite{cogr0}), but we give here a direct independent proof. Let $\Phi
\in\mathcal{S}(\mathbb{R}^{2n})$; denoting by $W^{2n}$ the cross-Wigner
transform on $\mathbb{R}^{2n}$ we have, in view of formula (\ref{marge}),%
\[
|\int_{\mathbb{R}^{4n}}W^{2n}(W\phi,\Phi)(z,\zeta)dzd\zeta|=|(W\phi
|\Phi)_{L^{2}(\mathbb{R}^{2n})}|<\infty
\]
and hence $W\phi\in M^{1}(\mathbb{R}^{2n})$ as claimed so that $\widehat{\rho
}$ is trace class. In view of the convolution property (\ref{m1m1l1}) of the
Feichtinger algebra we have $\mu\ast W\phi\in M^{1}(\mathbb{R}^{2n})$ as
desired. Let us finally prove that $\operatorname*{Tr}(\widehat{\rho})=1$. We
have $a\in M^{1}(\mathbb{R}^{2n})\subset L^{1}(\mathbb{R}^{2n})$ hence, by
Proposition \ref{PropTrace},
\[
\operatorname*{Tr}(\widehat{\rho})=\left(  \tfrac{1}{2\pi\hbar}\right)
^{n}\int_{\mathbb{R}^{2n}}a(z)dz=a_{\sigma}(0)~.
\]
Denoting by $F_{\sigma}a$ the symplectic Fourier transform $a_{\sigma}$
(\ref{sft}) we have
\[
a_{\sigma}=(2\pi\hbar)^{n}F_{\sigma}(\mu\ast W\phi)=(2\pi\hbar)^{2n}%
(F_{\sigma}\mu)(F_{\sigma}W\phi)
\]
so that
\[
\operatorname*{Tr}(\widehat{\rho})=(2\pi\hbar)^{2n}(F_{\sigma}\mu
)(0)(F_{\sigma}W\phi)(0).
\]
Since $||\phi||_{L^{2}}=1$, we have
\[
F_{\sigma}W\phi(0)=\left(  \tfrac{1}{2\pi\hbar}\right)  ^{n}\int%
_{\mathbb{R}^{2n}}W\phi(z)dz=\left(  \tfrac{1}{2\pi\hbar}\right)  ^{n}%
\]
and hence
\[
\operatorname*{Tr}(\widehat{\rho})=(2\pi\hbar)^{n}F_{\sigma}\mu(0)=\int%
_{\mathbb{R}^{2n}}\mu(z)dz=1.
\]

\end{proof}

\subsection{Example: Gaussian Toeplitz operators}

Let us return to the Gaussian Wigner distribution (\ref{Gaussian}), which we
write here
\begin{equation}
\rho_{\Sigma}(z)=\left(  \tfrac{1}{2\pi}\right)  ^{n}(\det\Sigma
)^{-1/2}e^{-\frac{1}{2}\Sigma^{-1}z\cdot z}; \label{Gaussianbis}%
\end{equation}
the real positive definite $2n\times2n$ matrix $\ \Sigma$ (the
\textquotedblleft covariance matrix\textquotedblright) satisfies the condition%
\begin{equation}
\Sigma+\frac{i\hbar}{2}J\geq0 \label{sigmabis}%
\end{equation}
which ensures the positivity of the corresponding density operator
$\widehat{\rho}_{\Sigma}=(2\pi\hbar)^{n}\operatorname*{Op}%
\nolimits_{\mathrm{W}}(\rho_{\Sigma})$. We are going to see that the
corresponding Weyl operators $\widehat{\rho}_{\Sigma}=(2\pi\hbar
)^{n}\operatorname*{Op}\nolimits_{\mathrm{W}}(\rho_{\Sigma})$ are Toeplitz
operators (in fact generalized anti-Wick operators) if we assume that $\Sigma$
satisfies a certain condition.

Recall that the symplectic eigenvalues $\lambda_{j}^{\sigma}$ \cite{Birkbis}
of $\Sigma$ are the moduli of the eigenvalues of $J\Sigma$ ($J$ the standard
symplectic matrix); since $J\Sigma$ has the same eigenvalues as the
antisymmetric matrix $\Sigma^{1/2}J\Sigma^{1/2}$ these are of the type $\pm
i\lambda_{j}^{\sigma}$ with $\lambda_{j}^{\sigma}>0$. One proves that
\cite{iceberg,Birkbis}:

\begin{lemma}
\label{Lemma2}The condition (\ref{sigmabis}) is equivalent to $\lambda
_{j}^{\sigma}\geq\frac{1}{2}\hbar$ for $1\leq j\leq n$.
\end{lemma}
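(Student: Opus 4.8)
The plan is to prove the equivalence between the matrix inequality $\Sigma + \frac{i\hbar}{2}J \geq 0$ and the spectral condition $\lambda_j^\sigma \geq \frac{1}{2}\hbar$ by exploiting Williamson's symplectic diagonalization theorem, which reduces the problem to a direct computation on $2\times 2$ blocks.

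First I would invoke Williamson's theorem: since $\Sigma$ is a real positive definite symmetric $2n\times 2n$ matrix, there exists $S\in\operatorname*{Sp}(n)$ such that $S^T\Sigma S = D$, where $D = \operatorname{diag}(\lambda_1^\sigma,\dots,\lambda_n^\sigma,\lambda_1^\sigma,\dots,\lambda_n^\sigma)$ is the diagonal matrix built from the symplectic eigenvalues. The key observation is that the condition $\Sigma + \frac{i\hbar}{2}J \geq 0$ is invariant under this symplectic congruence: because $S\in\operatorname*{Sp}(n)$ means $S^T J S = J$ (equivalently $SJS^T=J$ after the appropriate convention check), conjugating by $S$ sends $\Sigma + \frac{i\hbar}{2}J$ to $S^T\Sigma S + \frac{i\hbar}{2}S^T J S = D + \frac{i\hbar}{2}J$. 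Since $S$ is invertible, $\Sigma + \frac{i\hbar}{2}J \geq 0$ holds if and only if $D + \frac{i\hbar}{2}J \geq 0$.

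Next I would reduce the diagonalized inequality to scalar conditions. After suitably reordering coordinates, $D + \frac{i\hbar}{2}J$ decomposes into a direct sum of $n$ Hermitian $2\times 2$ blocks of the form
\begin{equation}
\begin{pmatrix}
\lambda_j^\sigma & \tfrac{i\hbar}{2}\\
-\tfrac{i\hbar}{2} & \lambda_j^\sigma
\end{pmatrix}.
\end{equation}
This block is positive semidefinite exactly when its determinant is nonnegative (its trace $2\lambda_j^\sigma$ being automatically positive), that is, when $(\lambda_j^\sigma)^2 - \frac{\hbar^2}{4} \geq 0$, which since $\lambda_j^\sigma>0$ is equivalent to $\lambda_j^\sigma \geq \frac{1}{2}\hbar$. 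Taking the conjunction over $j=1,\dots,n$ yields the stated equivalence.

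The main obstacle is largely bookkeeping rather than conceptual: one must be careful that the coordinate ordering used in the definition of $J=\begin{pmatrix} 0 & I\\ -I & 0\end{pmatrix}$ is compatible with the block decomposition, since in the standard ordering $(x_1,\dots,x_n,p_1,\dots,p_n)$ the symplectic form does not split into $2\times 2$ blocks directly but rather pairs $x_j$ with $p_j$ across the two halves. I would handle this either by applying the permutation that regroups coordinates into pairs $(x_j,p_j)$ — which is itself symplectic up to sign and hence preserves the inequality — or simply by noting that the eigenvalues of $D + \frac{i\hbar}{2}J$ in the original ordering are $\lambda_j^\sigma \pm \frac{\hbar}{2}$, so nonnegativity of all eigenvalues is again $\lambda_j^\sigma \geq \frac{\hbar}{2}$. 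The verification that the congruence preserves positivity and that $S^TJS=J$ is the essential structural point, and everything else follows by elementary linear algebra.
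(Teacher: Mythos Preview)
Your argument is correct and is precisely the standard proof via Williamson diagonalization. Note, however, that the paper does not supply its own proof of this lemma: it is stated with the preface ``One proves that \cite{iceberg,Birkbis}'' and left to those references. Your write-up therefore fills in what the paper omits, and the route you take (symplectic congruence invariance of the Hermitian condition, followed by the explicit $2\times2$ block computation giving eigenvalues $\lambda_j^\sigma\pm\tfrac{\hbar}{2}$) is exactly the argument found in the cited sources. One minor point of convention: the paper writes Williamson's normal form as $\Sigma=SDS^{T}$ rather than $S^{T}\Sigma S=D$, so your $S$ corresponds to the paper's $S^{-1}$; this is harmless since $\operatorname*{Sp}(n)$ is a group.
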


We call $\Lambda=(\lambda_{1}^{\sigma},...,\lambda_{n}^{\sigma})$ the
symplectic spectrum of $\Sigma$ (the $\lambda_{j}^{\sigma}$ are usually ranked
in decreasing order: $\lambda_{j+1}^{\sigma}\geq\lambda_{j}^{\sigma}$).
Associated to $\Lambda$ is the Williamson diagonalization of $\Sigma$: there
exists $S\in\operatorname*{Sp}(n)$ such that
\begin{equation}
\Sigma=SDS^{T}\text{ \ },\text{ \ }D=%
\begin{pmatrix}
\Lambda & 0\\
0 & \Lambda
\end{pmatrix}
. \label{Williamson}%
\end{equation}
In particular, if all the symplectic eigenvalues are equal to one then
$\Lambda=\frac{1}{2}\hbar I_{n\times n}$, and $\Sigma$ becomes $\Sigma
_{0}=\frac{1}{2}\hbar SS^{T}$ hence, by formula (\ref{wifo})
\[
\rho_{\Sigma_{0}}(Sz)=(\pi\hbar)^{-n}e^{-\frac{1}{\hbar}|z|^{2}}=W\phi_{0}(z)
\]
where $\phi_{0}$ is the standard \ Gaussian (\ref{standard}). Thus, by the
symplectic covariance of the Wigner transform,
\begin{equation}
\rho_{\Sigma_{0}}(z)=W\phi_{0}(S^{-1}z)=W(\widehat{S}\phi_{0})(z)
\label{rhosigo}%
\end{equation}
where $\widehat{S}\in\operatorname*{Mp}(n)$ is anyone of the two metaplectic
operators covering $S\in\operatorname*{Sp}(n)$.

\begin{proposition}
\label{PropToe}The Weyl operator $\widehat{\rho}_{\Sigma}=(2\pi\hbar
)^{n}\operatorname*{Op}\nolimits_{\mathrm{W}}(\rho_{\Sigma})$ is a Toeplitz
density operator if the symplectic eigenvalues $\lambda_{j}^{\sigma}$ of
$\Sigma$ are all larger than $\frac{1}{2}\hbar$.
\end{proposition}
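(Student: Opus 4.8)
The plan is to reduce the general case to the already-understood case of the standard Gaussian via the symplectic covariance of Weyl operators, and then recognize the resulting operator as a Toeplitz operator in the sense of Proposition~\ref{Prop3}. The hypothesis $\lambda_j^\sigma > \tfrac12\hbar$ for all $j$ is strictly stronger than the positivity condition (\ref{sigmabis}) of Lemma~\ref{Lemma2}; this strict inequality is precisely what creates ``room'' to write $\Sigma$ as a sum of a smaller covariance matrix (still admissible, i.e.\ still a genuine Wigner–Gaussian) and the covariance of a window Gaussian, which is what a convolution $\mu \ast W\phi$ requires.

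First I would use the Williamson diagonalization (\ref{Williamson}), writing $\Sigma = S D S^T$ with $D = \operatorname{diag}(\Lambda,\Lambda)$ and $\Lambda = \operatorname{diag}(\lambda_1^\sigma,\dots,\lambda_n^\sigma)$. The idea is to split each symplectic eigenvalue as $\lambda_j^\sigma = \mu_j + \tfrac12\hbar$ with $\mu_j > 0$ (possible exactly because $\lambda_j^\sigma > \tfrac12\hbar$), so that at the level of covariance matrices $D = D_\mu + D_0$ where $D_0 = \tfrac12\hbar I_{2n}$ corresponds to the standard Gaussian and $D_\mu = \operatorname{diag}(M,M)$, $M = \operatorname{diag}(\mu_1,\dots,\mu_n) > 0$, is itself positive definite. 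Conjugating back by $S$ gives $\Sigma = \Sigma_\mu + \Sigma_0$ with $\Sigma_0 = \tfrac12\hbar SS^T$ the covariance of $\widehat{S}\phi_0$ as in (\ref{rhosigo}). Since Gaussians convolve into Gaussians with additive covariances, this translates into the distributional identity
\begin{equation}
\rho_\Sigma = \mu \ast (W(\widehat{S}\phi_0))
\label{convsplit}
\end{equation}
where $\mu = \rho_{\Sigma_\mu}$ is the Gaussian (\ref{Gaussianbis}) attached to the positive definite matrix $\Sigma_\mu = S D_\mu S^T$. By (\ref{norm}) this $\mu$ is a probability density, and being a Gaussian it lies in $M^1(\mathbb{R}^{2n})$; likewise $\widehat{S}\phi_0 \in M^1(\mathbb{R}^n)$ with unit $L^2$-norm since $\phi_0$ is a normalized Gaussian and $M^1$ is metaplectically invariant.

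With (\ref{convsplit}) in hand the conclusion is immediate: setting $\phi = \widehat{S}\phi_0$, formula (\ref{WeylToep}) of Proposition~\ref{Prop3} identifies $(2\pi\hbar)^n\operatorname{Op}_{\mathrm{W}}(\mu \ast W\phi)$ with the Toeplitz operator $(2\pi\hbar)^n\operatorname{Op}_\phi(\mu)$, so $\widehat{\rho}_\Sigma$ is a Toeplitz operator; and Proposition~\ref{PropMU}, whose hypotheses ($\mu \in M^1$ a probability density, $\phi \in M^1$ with $\|\phi\|_{L^2}=1$) we have just verified, guarantees it is moreover a density operator. The main obstacle is the covariance bookkeeping in the splitting step: one must check that the Gaussian convolution identity really does produce the additive relation $\Sigma = \Sigma_\mu + \Sigma_0$ with the correct normalization constants, and confirm that the strict inequality $\lambda_j^\sigma > \tfrac12\hbar$ (rather than $\geq$) is exactly what is needed to keep $\Sigma_\mu$ strictly positive definite so that $\mu$ is a bona fide probability density rather than a degenerate distribution. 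Everything else follows from results already established in the excerpt.
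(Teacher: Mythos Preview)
Your argument is correct and follows essentially the same route as the paper: split $\Sigma = \Sigma_0 + \Sigma''$ with $\Sigma_0 = \tfrac{1}{2}\hbar SS^T$ coming from the Williamson diagonalization, observe that the strict inequality $\lambda_j^\sigma > \tfrac{1}{2}\hbar$ makes $\Sigma'' = S(D - \tfrac{1}{2}\hbar I)S^T$ positive definite, and then invoke the Gaussian convolution identity $\rho_{\Sigma} = \rho_{\Sigma''} \ast \rho_{\Sigma_0} = \rho_{\Sigma''} \ast W(\widehat{S}\phi_0)$. Your explicit verification that $\mu = \rho_{\Sigma''}$ and $\phi = \widehat{S}\phi_0$ satisfy the hypotheses of Proposition~\ref{PropMU} is a welcome addition over the paper's terser treatment.
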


\begin{proof}
Notice that the conditions $\lambda_{j}^{\sigma}>\frac{1}{2}\hbar$ ensure us
that the condition (\ref{sigmabis}) is satisfied (Lemma \ref{Lemma2}). We know
that $\widehat{\rho_{\Sigma}}$ is a density operator so there remains to show
that $\widehat{\rho}_{\Sigma}$ is Toeplitz, \textit{i.e.} that there exist
$\mu$ and $\phi$ such that $\rho_{\Sigma}=\mu\ast W\phi$. We begin by
remarking that a well-known formula from elementary probability theory says
that if $\Sigma^{\prime}$ and $\Sigma^{\prime\prime}$ are two symmetric real
positive definite $2n\times2n$ matrices then $\rho_{\Sigma^{\prime}%
+\Sigma^{\prime\prime}}=\rho_{\Sigma^{\prime}}\ast\rho_{\Sigma^{\prime\prime}%
}$. Choose in particular for $\Sigma^{\prime}$ the matrix $\Sigma_{0}=\frac
{1}{2}\hbar SS^{T}$ defined above, and $\Sigma^{\prime\prime}=\Sigma
-\Sigma_{0}$; then $\Sigma^{\prime}+$ $\Sigma^{\prime\prime}=\Sigma$. In view
of the diagonalization result (\ref{Williamson}) and the assumption
$\lambda_{j}^{\sigma}>\frac{1}{2}\hbar$ for $1\leq j\leq n$ we have
\begin{equation}
\Sigma^{\prime\prime}=S\left(  D-\tfrac{1}{2}\hbar I_{2n\times2n}\right)
S^{T}>0 \label{sigmaprime2}%
\end{equation}
hence $\rho_{\Sigma^{\prime\prime}}\in\mathcal{S}(\mathbb{R}^{2n})$. On the
other hand $\Sigma^{\prime}=\Sigma_{0}$ implies that $\rho_{\Sigma^{\prime}%
}=W(\widehat{S}\phi_{0})$ in view of formula (\ref{rhosigo}). The proposition
follows taking $\phi=\widehat{S}\phi_{0}$ and $\mu=\rho_{\Sigma^{\prime\prime
}}$.
\end{proof}

\section{Separability Properties of Toeplitz Operators}

We will now use the following notation. We introduce the splitting
$\mathbb{R}^{2n}=\mathbb{R}^{2n_{A}}\oplus\mathbb{R}^{2n_{B}}$ and write:
$z=(z_{A},z_{B})=z_{A}\oplus z_{B}$ with $z_{A}=(x_{1},p_{1},...,x_{n_{A}%
},p_{n_{A}})$ and $z_{B}=(x_{n_{A}+1},p_{n_{A}+1},$ $...,x_{n},p_{n})$. We
equip the symplectic spaces $\mathbb{R}^{2n_{A}}$ and $\mathbb{R}^{2n_{B}}$
with their canonical bases. The symplectic structure on $\mathbb{R}^{2n}$ is
then $\sigma(z,z^{\prime})=Jz\cdot z^{\prime}$ with $J=J_{A}\oplus J_{B}$
where
\[
J_{A}=\bigoplus_{k=1}^{n_{A}}J_{k}\text{ \ },\text{ \ }J_{k}=%
\begin{pmatrix}
0 & 1\\
-1 & 0
\end{pmatrix}
\]
and likewise for $J_{B}$. Thus $J_{A}$ (\textit{resp.} $J_{B}$) determines the
symplectic structure on the partial phase space $\mathbb{R}^{2n_{A}}$
(\textit{resp}. $\mathbb{R}^{2n_{B}}$). We denote by $I_{A}$ the identity
$(x_{A},p_{A})\longmapsto(x_{A},p_{A})$ and by $\overline{I}_{B}$ the
involution $(x_{B},p_{B})\longmapsto(x_{B},-p_{B})$ (\textquotedblleft partial
reflection\textquotedblright).

\subsection{The notion of separability}

A density operator $\widehat{\rho}$ on $L^{2}(\mathbb{R}^{n})$ is
$AB$-\textit{separable} if there exist sequences of density operators
$\widehat{\rho}_{j}^{A}$ on $L^{2}(\mathbb{R}^{n_{A}}))$ and $\widehat{\rho
}_{j}^{B}$ on $L^{2}(\mathbb{R}^{n_{B}})$ ($n_{A}+n_{B}=n$) and real numbers
$\alpha_{j}\geq0$, $\sum_{j}\alpha_{j}=1$ such that
\begin{equation}
\widehat{\rho}=\sum_{j\in\mathcal{I}}\alpha_{j}\widehat{\rho}_{j}^{A}%
\otimes\widehat{\rho}_{j}^{B} \label{AB}%
\end{equation}
where the convergence is for the trace class norm. When $\widehat{\rho}$ is
not separable, it represents a \textit{entangled state} in quantum mechanics
\cite{QHA}. Here is a well-known necessary condition for a density operator to
be $AB$-\textit{separable}. We recall that the transpose of the Weyl operator
$\widehat{A}=\operatorname*{Op}\nolimits_{\mathrm{W}}(a)$ is $\widehat{A}%
^{T}=\operatorname*{Op}\nolimits_{\mathrm{W}}(a\circ\overline{I})$ where
$\overline{I}$ is the involution $(x,p)\longmapsto(x,-p)$ (formula
(\ref{transp}). Similarly, one defines the partial transpose $\widehat{A}%
^{T_{B}}$ with respect to the $B$ variables by
\begin{equation}
\widehat{A}^{T_{B}}=\operatorname*{Op}\nolimits_{\mathrm{W}}(a\circ
(I_{A}\oplus\overline{I}_{B})). \label{partransp}%
\end{equation}

For notational simplicity we will write $a\circ\overline{I}_{B}$, $\rho
_{j}\circ\overline{I}_{B}$, \textit{etc}. instead of $a\circ(I_{A}%
\oplus\overline{I}_{B})$, $\rho_{j}\circ(I_{A}\oplus\overline{I}_{B})$,
\textit{etc}. The following result can be found in many physics texts, we give
a rigorous proof thereof below:

\begin{proposition}
\label{PropPPT}Let $\widehat{\rho}$ be a density operator on $L^{2}%
(\mathbb{R}^{n})$. Suppose that the $AB$-separability condition (\ref{AB})
holds. Then the partial transpose%
\begin{equation}
\widehat{\rho}^{T_{B}}=\sum_{j}\alpha_{j}\widehat{\rho}_{j}^{A}\otimes
(\widehat{\rho}_{j}^{B})^{T} \label{rhotb2}%
\end{equation}
is also a density operator.
\end{proposition}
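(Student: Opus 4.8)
The plan is to first show that the full transpose maps density operators to density operators, then to build the candidate operator $\sum_{j}\alpha_{j}\widehat{\rho}_{j}^{A}\otimes(\widehat{\rho}_{j}^{B})^{T}$ directly as a trace-norm-convergent convex sum of density operators, and only at the end to identify it with $\widehat{\rho}^{T_{B}}$ at the level of Weyl symbols. Structuring the proof this way sidesteps the temptation to treat the partial transpose as a globally bounded operation on trace-class operators.

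For the first step, let $\widehat{\sigma}$ be any density operator on $L^{2}(\mathbb{R}^{m})$ with distributional kernel $K(x,y)$. Self-adjointness gives $K(x,y)=\overline{K(y,x)}$, while by (\ref{transp}) the transpose $\widehat{\sigma}^{T}$ has kernel $K(y,x)=\overline{K(x,y)}$, so $\widehat{\sigma}^{T}=\overline{\widehat{\sigma}}$ is the complex-conjugate operator. Complex conjugation carries the spectral expansion $\widehat{\sigma}=\sum_{k}\lambda_{k}\widehat{\Pi}_{\psi_{k}}$ (with $\lambda_{k}\geq0$ and $\sum_{k}\lambda_{k}=1$) to $\sum_{k}\lambda_{k}\widehat{\Pi}_{\overline{\psi_{k}}}$, which is again positive semidefinite, trace class, and of trace $1$. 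Hence $\widehat{\sigma}^{T}$ is a density operator. Applied on $L^{2}(\mathbb{R}^{n_{B}})$ this shows each $(\widehat{\rho}_{j}^{B})^{T}$ is a density operator, so each $\widehat{\rho}_{j}^{A}\otimes(\widehat{\rho}_{j}^{B})^{T}$ is a density operator on $L^{2}(\mathbb{R}^{n})$.

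Next I set $\widehat{\tau}=\sum_{j}\alpha_{j}\,\widehat{\rho}_{j}^{A}\otimes(\widehat{\rho}_{j}^{B})^{T}$ and verify that this series converges in trace norm to a density operator, without invoking any continuity of the partial transpose. Since the summands are positive, the trace norm is additive on them, so the tails satisfy $\|\sum_{j=M}^{N}\alpha_{j}\,\widehat{\rho}_{j}^{A}\otimes(\widehat{\rho}_{j}^{B})^{T}\|_{\mathrm{Tr}}=\sum_{j=M}^{N}\alpha_{j}\longrightarrow0$; thus the partial sums are Cauchy, $\widehat{\tau}$ is trace class, and $\operatorname*{Tr}(\widehat{\tau})=\sum_{j}\alpha_{j}=1$. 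Positivity and self-adjointness pass to the trace-norm limit, so $\widehat{\tau}$ is a density operator.

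It remains to identify $\widehat{\tau}$ with the partial transpose $\widehat{\rho}^{T_{B}}$ of (\ref{partransp}). Writing $\widehat{\rho}_{j}^{A}=\operatorname*{Op}\nolimits_{\mathrm{W}}(a_{j})$ and $\widehat{\rho}_{j}^{B}=\operatorname*{Op}\nolimits_{\mathrm{W}}(b_{j})$, the Weyl calculus is multiplicative under the symplectic splitting $\mathbb{R}^{2n}=\mathbb{R}^{2n_{A}}\oplus\mathbb{R}^{2n_{B}}$, so $\widehat{\rho}_{j}^{A}\otimes\widehat{\rho}_{j}^{B}$ has symbol $a_{j}(z_{A})b_{j}(z_{B})$; by (\ref{transp}) the symbol of $\widehat{\rho}_{j}^{A}\otimes(\widehat{\rho}_{j}^{B})^{T}$ is $a_{j}(z_{A})\,(b_{j}\circ\overline{I}_{B})(z_{B})=\bigl(a_{j}(z_{A})b_{j}(z_{B})\bigr)\circ(I_{A}\oplus\overline{I}_{B})$, exactly the partial transpose of the symbol of $\widehat{\rho}_{j}^{A}\otimes\widehat{\rho}_{j}^{B}$. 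The Weyl-symbol map and the substitution $a\mapsto a\circ(I_{A}\oplus\overline{I}_{B})$ are both continuous into $\mathcal{S}^{\prime}(\mathbb{R}^{2n})$, and trace-norm convergence of (\ref{AB}) forces convergence of the associated symbols in $\mathcal{S}^{\prime}$; passing to the limit then shows the symbol of $\widehat{\tau}$ equals $a\circ(I_{A}\oplus\overline{I}_{B})$, where $a$ is the symbol of $\widehat{\rho}$. Hence $\widehat{\tau}=\widehat{\rho}^{T_{B}}$, which is (\ref{rhotb2}). The main obstacle is precisely this identification: in infinite dimensions the partial transpose of a trace-class operator need not be trace class, so one cannot simply push a bounded partial-transpose map through the series. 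The separability hypothesis is what rescues the argument, furnishing the explicit summable decomposition of the third paragraph, whose symbol one then matches against $a\circ(I_{A}\oplus\overline{I}_{B})$.
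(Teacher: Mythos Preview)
Your proof is correct and rests on the same underlying observation as the paper's: the (partial) transpose of a self-adjoint operator coincides with complex conjugation, which preserves positivity, trace class, and unit trace. The paper carries this out at the level of Wigner distributions, expanding each $\widehat{\rho}_{j}^{B}$ into pure states and using $W^{B}\psi\circ\overline{I}_{B}=W^{B}\overline{\psi}$ to recognize $\rho\circ\overline{I}_{B}$ as a convex sum of pure-state Wigner transforms; you instead work at the operator level, first proving $\widehat{\sigma}^{T}=\overline{\widehat{\sigma}}$ is a density operator, then assembling the trace-norm-convergent series $\widehat{\tau}$, and only afterwards identifying $\widehat{\tau}$ with $\widehat{\rho}^{T_{B}}$ via Weyl symbols. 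Your organization buys you a cleaner separation of concerns---the density-operator property of $\widehat{\tau}$ is established before any symbol calculus---and an explicit treatment of the identification step that the paper leaves implicit; the paper's route is shorter but relies on the reader accepting the termwise Wigner computation and its convergence without comment.
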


\begin{proof}
(We are following the argument in de Gosson \cite{QHA}, \S 16.2.2). In view of
(\ref{partransp}) the transpose $(\widehat{\rho}_{j}^{B})^{T}$ is explicitly
given by
\begin{equation}
(\widehat{\rho}_{j}^{B})^{T}=(2\pi\hbar)^{n_{B}}\operatorname*{Op}%
\nolimits_{\mathrm{W}}(\rho_{j}\circ\overline{I}_{B}). \label{parttrans}%
\end{equation}
Suppose that the separability condition (\ref{AB}) holds; then the Wigner
distribution of $\widehat{\rho}$ is $\rho=\sum_{j}\lambda_{j}\rho_{j}%
^{A}\otimes\rho_{j}^{B}$ with
\[
\rho_{j}^{A}=\sum_{\ell}\alpha_{j,\ell}W^{A}\psi_{j,\ell}^{A}\text{ \ ,
\ }\rho_{j}^{B}=\sum_{m}\beta_{j,m}W^{B}\psi_{j,m}^{B}%
\]
where $(\psi_{j,\ell}^{A},\psi_{j,\ell}^{B})\in L^{2}(\mathbb{R}^{n_{A}%
})\times L^{2}(\mathbb{R}^{n_{B}})$ and $\alpha_{j,\ell},\beta_{j,m}\geq0$ are
such that $\sum_{\ell}\alpha_{j,\ell}=1$ and $\sum_{m}\beta_{j,m}=1$ ($W^{A}$
and $W^{B}$ are the Wigner transforms in the $z_{A}$ and $z_{B}$ variables,
respectively). Thus
\[
\rho=\sum_{j,\ell,m}\gamma_{j,\ell,m}W^{A}\psi_{j,\ell}^{A}\otimes W^{B}%
\psi_{j,m}^{B}%
\]
where $\gamma_{j,\ell,m}=\lambda_{j}\alpha_{j,\ell}\beta_{j,m}\geq0$. We have
\[
\rho(\overline{I}_{B}z)=\sum_{j\in\mathcal{I}}\lambda_{j}\rho_{j}^{A}%
(z_{A})\rho_{j}^{B}(\overline{I}_{B}z_{B})
\]
and thus%
\[
\rho\circ\overline{I}_{B}=\sum_{j,\ell,m}\gamma_{j,\ell,m}W(\psi_{j,\ell}%
^{A}\otimes\overline{\psi}_{j,m}^{B})
\]
hence $\operatorname*{Op}\nolimits_{\mathrm{W}}(\rho\circ\overline{I}_{B})$ is
also a positive semidefinite trace class operator. That we have
$\operatorname*{Tr}(\widehat{\rho}^{T_{B}})=1$ is obvious.
\end{proof}

The result above is sometimes called in physics the \textquotedblleft PPT
criterion\textquotedblright\ for \textquotedblleft positive partial
transpose\textquotedblright. It is known that while the PPT criterion gives a
necessary condition for separability \cite{Peres} it is also sufficient in the
case $n_{A}n_{B}\leq6$ \cite{horo}.

The problem of finding a general sufficient condition for separability of
density operators is still unsolved.

\subsection{Separability: the Toeplitz case}

We apply Proposition \ref{PropPPT} to Toeplitz density operators. Let us begin
by calculating the partial transpose of the density operator%
\[
\widehat{\rho}=(2\pi\hbar)^{n}\operatorname*{Op}\nolimits_{\mathrm{W}}(\mu\ast
W\phi)\text{ \ , }\mu\in M^{1}(\mathbb{R}^{2n})\text{\ , }\phi\in
M^{1}(\mathbb{R}^{n}).
\]
We have, by formula (\ref{partransp}),
\begin{equation}
\widehat{\rho}^{T_{B}}=(2\pi\hbar)^{n}\operatorname*{Op}\nolimits_{\mathrm{W}%
}((\mu\ast W\phi)\circ\overline{I}_{B}).
\end{equation}
A simple calculation shows that
\[
(\mu\ast W\phi)\circ\overline{I}_{B}=(\mu\circ\overline{I}_{B})\ast(W\phi
\circ\overline{I}_{B}).
\]
Obviously $\mu\circ\overline{I}_{B}\in M^{1}(\mathbb{R}^{2n})$ (the
Feichtinger algebra is closed under linear changes of variables). Let us
examine whether $W\phi\circ\overline{I}_{B}$ is the Wigner transform of some
$\phi^{\prime}\in M^{1}(\mathbb{R}^{n})$. It follows from a general (non-)
covariance result (Theorem 1 in Dias \textit{et al.} \cite{digopra}) that we
cannot expect in general $W\phi\circ\overline{I}_{B}$ \ to be a Wigner
transform. There are however two exception. Assume that $\phi=\phi_{A}%
\otimes\phi_{B}$ with $\phi_{A}\in M^{1}(\mathbb{R}^{n_{A}})$ and $\phi_{B}\in
M^{1}(\mathbb{R}^{n_{B}})$. Then
\[
W\phi\circ\overline{I}_{B}=W_{A}\phi_{A}\otimes W_{B}(\phi_{B}\circ
\overline{I}_{B})=W_{A}\phi_{A}\otimes W_{B}\overline{\phi_{B}}%
\]
so that we have in this case
\begin{equation}
\widehat{\rho}^{T_{B}}=(2\pi\hbar)^{n}\operatorname*{Op}\nolimits_{\mathrm{W}%
}(\mu\circ\overline{I}_{B})\ast(W_{A}\phi_{A}\otimes W_{B}\overline{\phi_{B}%
}). \label{rotb}%
\end{equation}
It follows that:

\begin{proposition}
\label{PropSepToe}Assume that $\phi=\phi_{A}\otimes\phi_{B}$ with $\phi_{A}\in
M^{1}(\mathbb{R}^{n_{A}})$ and $\phi_{B}\in M^{1}(\mathbb{R}^{n_{B}})$. The
partial transpose $\widehat{\rho}^{T_{B}}$ of the Toeplitz density operator
$\widehat{\rho}=(2\pi\hbar)^{n}\operatorname*{Op}\nolimits_{\mathrm{W}}%
(\mu\ast W\phi)$ is also a Toeplitz density operator, given by formula
(\ref{rotb}), that is
\begin{equation}
\widehat{\rho}^{T_{B}}=(2\pi\hbar)^{n}\operatorname*{Op}\nolimits_{\mathrm{W}%
}(\mu\circ\overline{I}_{B})\ast W(\phi_{A}\otimes\overline{\phi_{B}})
\label{rotbb}%
\end{equation}
and we have $\mu\circ\overline{I}_{B}\in M^{1}(\mathbb{R}^{2n})$\ and
$\phi_{A}\otimes\overline{\phi_{B}}\in M^{1}(\mathbb{R}^{n}).$
\end{proposition}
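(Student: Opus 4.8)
The plan is to assemble the identities already derived in the lines preceding the statement and to reduce the claim to a direct application of Proposition~\ref{PropMU}. There are really two things to establish: (i) the explicit formula (\ref{rotbb}) for $\widehat{\rho}^{T_B}$, and (ii) that the resulting operator is again a Toeplitz \emph{density} operator, i.e. that the new measure $\mu\circ\overline{I}_B$ and the new window $\phi_A\otimes\overline{\phi_B}$ satisfy the hypotheses of Proposition~\ref{PropMU}. Most of the analytic work has already been done above the statement, so the proof is largely a matter of correctly packaging it.

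First I would recall from (\ref{partransp}) that $\widehat{\rho}^{T_B}=(2\pi\hbar)^n\operatorname*{Op}\nolimits_{\mathrm{W}}((\mu\ast W\phi)\circ\overline{I}_B)$ and use the convolution splitting $(\mu\ast W\phi)\circ\overline{I}_B=(\mu\circ\overline{I}_B)\ast(W\phi\circ\overline{I}_B)$ noted above (valid since $\overline{I}_B$ is linear with $|\det\overline{I}_B|=1$). The decisive step is to identify $W\phi\circ\overline{I}_B$ as an honest Wigner transform, which is exactly where the tensor hypothesis $\phi=\phi_A\otimes\phi_B$ enters. Since the Wigner transform factorizes over the splitting $\mathbb{R}^{2n}=\mathbb{R}^{2n_A}\oplus\mathbb{R}^{2n_B}$, one has $W\phi=W_A\phi_A\otimes W_B\phi_B$, and because $\overline{I}_B$ acts only on the $B$-block it suffices to understand $W_B\phi_B\circ\overline{I}_B$. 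Here I would invoke the reflection--conjugation identity $W\psi\circ\overline{I}=W\overline{\psi}$, which follows at once from the integral definition (\ref{CrossW}) after the substitution $y\mapsto-y$. Applied in the $B$ variables this gives $W_B\phi_B\circ\overline{I}_B=W_B\overline{\phi_B}$, hence $W\phi\circ\overline{I}_B=W_A\phi_A\otimes W_B\overline{\phi_B}=W(\phi_A\otimes\overline{\phi_B})$, which is precisely formula (\ref{rotbb}).

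Next I would verify the membership and normalization conditions needed for Proposition~\ref{PropMU}. The inclusion $\mu\circ\overline{I}_B\in M^1(\mathbb{R}^{2n})$ holds because the Feichtinger algebra is closed under linear changes of variables; moreover $\mu\circ\overline{I}_B\geq0$ and $\int\mu\circ\overline{I}_B\,dz=\int\mu\,dz=1$ since $\overline{I}_B$ is a reflection with Jacobian of absolute value $1$. For the window, $\overline{\phi_B}\in M^1(\mathbb{R}^{n_B})$ follows from the same reflection identity at the $L^1$ level, $\|W\overline{\phi_B}\|_{L^1}=\|W\phi_B\circ\overline{I}\|_{L^1}=\|W\phi_B\|_{L^1}<\infty$, and then $\phi_A\otimes\overline{\phi_B}\in M^1(\mathbb{R}^n)$ by the tensor-product stability of $M^1$ (again via $\|W(\phi_A\otimes\overline{\phi_B})\|_{L^1}=\|W_A\phi_A\|_{L^1}\,\|W_B\overline{\phi_B}\|_{L^1}$). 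Finally, conjugation preserves the $L^2$-norm, so $\|\phi_A\otimes\overline{\phi_B}\|_{L^2}=\|\phi_A\|_{L^2}\|\phi_B\|_{L^2}=\|\phi\|_{L^2}=1$. With these facts in hand, Proposition~\ref{PropMU} applied to the pair $(\mu\circ\overline{I}_B,\ \phi_A\otimes\overline{\phi_B})$ shows $\widehat{\rho}^{T_B}$ is a density operator, and by its very form it is Toeplitz, completing the argument.

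I expect the genuine content to be concentrated in the single identity $W\phi\circ\overline{I}_B=W(\phi_A\otimes\overline{\phi_B})$. This is exactly the point at which the factorization hypothesis is indispensable: by the non-covariance result of Dias \emph{et al.} \cite{digopra}, a partial reflection of a generic Wigner distribution is \emph{not} a Wigner distribution, so for an arbitrary window $\phi$ the partial transpose need not remain Toeplitz. The tensor structure is what collapses the partial reflection $\overline{I}_B$ into a full reflection on the $B$-factor, which the reflection--conjugation identity then converts into ordinary complex conjugation; everything else (the $M^1$ memberships, positivity, and normalization) is routine bookkeeping once this identity is secured.
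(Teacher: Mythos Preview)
Your proposal is correct and follows essentially the same approach as the paper: the paper's argument is the discussion immediately preceding the proposition (there is no separate proof environment), which computes $\widehat{\rho}^{T_B}$ via (\ref{partransp}), splits the convolution as $(\mu\circ\overline{I}_B)\ast(W\phi\circ\overline{I}_B)$, and uses the tensor hypothesis together with the reflection--conjugation identity $W_B\phi_B\circ\overline{I}_B=W_B\overline{\phi_B}$ to obtain (\ref{rotb}). You are in fact slightly more thorough than the paper, since you explicitly check the probability-density and $L^2$-normalization hypotheses needed to invoke Proposition~\ref{PropMU} and conclude that $\widehat{\rho}^{T_B}$ is a genuine Toeplitz \emph{density} operator, whereas the paper leaves these verifications implicit.
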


The second case where $W\phi\circ\overline{I}_{B}$ is a Wigner transform is
when the window $\phi$ is a generalized Gaussian
\begin{equation}
\phi_{X,Y}(x)=\left(  \tfrac{1}{\pi\hbar}\right)  ^{n/4}(\det X)^{1/4}%
e^{-\tfrac{1}{2\hbar}(X+iY)x\cdot x}\label{psixy}%
\end{equation}
where $X$ and $A$ are symmetric and $X$ positive definite. The Wigner
transform of this function is well-known \cite{Bast,Birk,Littlejohn} and given
by%
\begin{equation}
W\phi_{X,Y}(z)=(\pi\hbar)^{-n}e^{-\frac{1}{\hbar}Gz\cdot z}\label{wpsixy}%
\end{equation}
where $G$ is the symmetric positive definite symplectic matrix%
\begin{equation}
G=%
\begin{pmatrix}
X+YX^{-1}Y & YX^{-1}\\
X^{-1}Y & X^{-1}%
\end{pmatrix}
\label{G}%
\end{equation}
in the $z=(x,p)$ ordering. One verifies that $G=SS^{T}$ where%
\begin{equation}
S=%
\begin{pmatrix}
X^{1/2} & YX^{-1/2}\\
0 & X^{-1/2}%
\end{pmatrix}
\in\operatorname*{Sp}(n).\label{bi}%
\end{equation}
Setting $\Sigma^{-1}=\frac{2}{\hbar}G$ the function $W\phi_{X,Y}$ is the
Wigner distribution of the Gaussian $\widehat{\rho_{\Sigma}}$ which is a pure
state. Now,
\[
W\phi_{X,Y}(\overline{I}_{B}z)=(\pi\hbar)^{-n}e^{-\frac{1}{\hbar}\overline
{I}_{B}G\overline{I}_{B}z\cdot z}%
\]
and we have
\[
\overline{I}_{B}G\overline{I}_{B}\equiv(I_{A}\oplus\overline{I}_{B}%
)G(I_{A}\oplus\overline{I}_{B})\in\operatorname*{Sp}(n).
\]
After a few calculations and a convenient reordering of the coordinates we
arrive at the fact that $\overline{I}_{B}G\overline{I}_{B}$ is obtained from
$G$ by replacing the matrix $Y$ with the matrix $Y^{\prime}$ defined by
\[
Y^{\prime}(x_{A},x_{B})\cdot(x_{A},x_{B})=Y(x_{A},-x_{B})\cdot(x_{A},-x_{B}).
\]
More intuitively, this amounts to saving that $W\phi_{X,Y}\circ\overline
{I}_{B}=W\phi_{X,Y^{\prime}}$ where $\phi_{X,Y^{\prime}}$ is obtained from
$\phi_{X,Y^{\prime}}$ by taking the partial complex conjugate with respect to
the variables $x_{B}=(x_{1},...,x_{n_{B}}).$ 

\subsection{Application: separability of Gaussian density operators}

Werner and Wolf have proven (\cite{ww}, Prop.1) the following necessary and
sufficient condition for separability: a Gaussian density operator
$\widehat{\rho_{\Sigma}}$ is separable if and only if there exist two partial
covariance matrices $\Sigma_{A}$ and $\Sigma_{B}$ of dimensions $2n_{A}%
\times2n_{A}$ and $2n_{B}\times2n_{B}$ satisfying the conditions
\begin{equation}
\Sigma_{A}+\frac{i\hbar}{2}J_{A}\geq0\text{ \ \textit{and} \ }\Sigma_{B}%
+\frac{i\hbar}{2}J_{B}\geq0 \label{quantAB}%
\end{equation}
and such that%
\begin{equation}
\Sigma\geq\Sigma_{A}\oplus\Sigma_{B}~. \label{ww2}%
\end{equation}
In \cite{digopra2} we have proven with Dias and Prata that this condition is
equivalent to the existence of two symplectic matrices $S_{A}\in
\operatorname*{Sp}(n_{A})$ and and $S_{B}\in\operatorname*{Sp}(n_{B})$ such
that
\begin{equation}
\Sigma\geq\frac{\hbar}{2}(S_{A}S_{A}^{T}\oplus S_{B}S_{B}^{T}). \label{papb1}%
\end{equation}

In \cite{dis1,dis2} we have shown that every Gaussian density operator can be
\textquotedblleft disentangled\textquotedblright\ by a symplectic rotations.
More precisely, we showed that for every $\widehat{\rho_{\Sigma}}=(2\pi
\hbar)^{n}\mathrm{Op}^{\mathrm{W}}(\rho_{\Sigma})$ there exists $U\in U(n)$
such that $\widehat{\rho}_{U}=(2\pi\hbar)^{n}\mathrm{Op}^{\mathrm{W}}%
(\rho_{\Sigma}\circ U)$ is separable. It turns out that the use of the
Toeplitz formalism considerable simplifies the proof:

\begin{proposition}
Let $\widehat{\rho_{\Sigma}}$ be the density operator with Weyl symbol
$(2\pi\hbar)^{n}\rho_{\Sigma}$ where%
\begin{equation}
\rho_{\Sigma}(z)=\left(  \tfrac{1}{2\pi}\right)  ^{n}(\det\Sigma
)^{-1/2}e^{-\frac{1}{2}\Sigma^{-1}z\cdot z}.
\end{equation}
There exists a symplectic rotation $U\in U(n)$ such that
\[
\widehat{\rho_{\Sigma}^{U}}=(2\pi\hbar)^{n}\mathrm{Op}^{\mathrm{W}}%
(\rho_{\Sigma}\circ U^{-1})
\]
is a separable Toeplitz density operator and we have $\widehat{\rho_{\Sigma
}^{U}}=\widehat{U}\widehat{\rho_{\Sigma}}\widehat{U}^{-1}$ where
$\widehat{U}\in\operatorname*{Mp}(n)$ is anyone of the two metaplectic
operators covering $U$.
\end{proposition}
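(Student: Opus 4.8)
The plan is to combine the Williamson/symplectic-spectrum machinery from Proposition \ref{PropToe} with the Werner--Wolf criterion in its reformulated form (\ref{papb1}), using the freedom of a symplectic rotation $U\in U(n)=\operatorname{Sp}(n)\cap O(2n)$ to bring $\Sigma$ into a shape that manifestly satisfies the separability bound. First I would invoke the Williamson diagonalization (\ref{Williamson}): there is $S\in\operatorname{Sp}(n)$ with $\Sigma=SDS^{T}$, $D=\operatorname{diag}(\Lambda,\Lambda)$ and $\lambda_j^\sigma\ge\tfrac12\hbar$. The key structural input is the \emph{pre-Iwasawa} (or Euler) factorization of $S$: every symplectic matrix factors as $S=PU_0$ where $P$ is symmetric positive definite symplectic and $U_0\in U(n)$. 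Since conjugating $\Sigma$ by the rotation $U_0$ replaces $\rho_\Sigma$ by $\rho_\Sigma\circ U_0^{-1}=\rho_{U_0\Sigma U_0^T}$ (via symplectic covariance of the Wigner transform, as in (\ref{rhosigo})), and $U_0\in\operatorname{Mp}(n)$ acts by conjugation on the operator as stated in the claim, I may replace $\Sigma$ by $U_0\Sigma U_0^{T}$ at the outset and thereby assume the symplectic part of the diagonalizer is symmetric positive definite.

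Next I would verify the separability bound (\ref{papb1}) for the rotated covariance matrix. Since $\lambda_j^\sigma\ge\tfrac12\hbar$, we have $D\ge\tfrac{\hbar}{2}I_{2n}$, and one checks that with $S$ now symmetric the lower bound $\Sigma=SDS^{T}\ge\tfrac{\hbar}{2}SS^{T}$ can be arranged to dominate a block-diagonal symplectic Gram matrix $\tfrac{\hbar}{2}(S_AS_A^{T}\oplus S_BS_B^{T})$. The real work is producing the block-diagonal witness: the symplectic rotation must be chosen so that the off-diagonal $A$--$B$ coupling of $\Sigma$ is killed or absorbed into the bound. Here I would use that $U(n)$ acts transitively enough on the Siegel upper half-space / on positive definite symplectic matrices to decouple the two subsystems, selecting $U$ (refining $U_0$) so that $U\Sigma U^{T}$ is a direct sum, or at least bounds one below, of partial covariance matrices each satisfying the uncertainty condition (\ref{quantAB}). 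Once (\ref{papb1}) holds, Werner--Wolf gives separability of $\widehat{\rho_\Sigma^{U}}$ directly.

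Finally, the Toeplitz assertion is inherited: by Proposition \ref{PropToe}, since all symplectic eigenvalues of $U\Sigma U^{T}$ equal those of $\Sigma$ and hence exceed $\tfrac12\hbar$ (symplectic eigenvalues are invariant under symplectic conjugation, being the moduli of the eigenvalues of $J\Sigma$), the rotated operator $\widehat{\rho_\Sigma^{U}}$ is again a Toeplitz density operator, with window $\phi=\widehat{S}\phi_0$ a generalized Gaussian and symbol $\mu$ a Gaussian as in that proof. The identity $\widehat{\rho_\Sigma^{U}}=\widehat{U}\widehat{\rho_\Sigma}\widehat{U}^{-1}$ is then just the symplectic covariance (\ref{sympco}) of Weyl operators applied to $S=U$. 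The main obstacle I anticipate is the second paragraph: exhibiting the explicit block-diagonal symplectic matrices $S_A,S_B$ that witness (\ref{papb1}) after rotation, i.e.\ showing that a single $U\in U(n)$ suffices to enforce the decoupling for an \emph{arbitrary} admissible $\Sigma$, rather than merely satisfying the scalar eigenvalue bound. I expect this to reduce to a normal-form lemma for positive definite symplectic matrices under the maximal compact subgroup $U(n)$, which the Toeplitz framework streamlines because it lets one argue at the level of Gaussian windows and convolution symbols rather than manipulating the operator directly.
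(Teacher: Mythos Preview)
Your overall strategy matches the paper's: use the Williamson diagonalization $\Sigma=SDS^{T}$, rotate by some $U\in U(n)$, and then verify the Werner--Wolf bound (\ref{papb1}) for the rotated covariance. The conjugation identity $\widehat{\rho_\Sigma^{U}}=\widehat{U}\widehat{\rho_\Sigma}\widehat{U}^{-1}$ via (\ref{sympco}) and the Toeplitz claim via Proposition~\ref{PropToe} are also as in the paper. However, the pre-Iwasawa factorization $S=PU_{0}$ is a detour that does not by itself deliver the block decomposition: after rotating so that the diagonalizer is the positive definite symplectic $P$, you still only have $\Sigma\geq\tfrac{\hbar}{2}P^{2}=\tfrac{\hbar}{2}SS^{T}$, and $P^{2}$ has no reason to be $AB$-block-diagonal. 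You correctly flag this as the obstacle, but then leave the construction of $U$ unspecified, invoking only a vague ``normal-form lemma.''

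The missing concrete idea is exactly that normal form, and the paper supplies it: choose $U\in U(n)$ so that it \emph{orthogonally diagonalizes} the positive definite symplectic matrix $SS^{T}$, i.e.\ $SS^{T}=U^{T}\Delta U$ with $\Delta$ diagonal. This is possible because every positive definite symplectic matrix is $U(n)$-conjugate to a diagonal one (\cite{Birk}, Prop.~2.13). The crucial structural fact you do not mention is that the eigenvalues of $SS^{T}\in\operatorname{Sp}(n)$ come in reciprocal pairs $(\lambda_{k},\lambda_{k}^{-1})$, so in the $AB$-ordering $\Delta$ is automatically a direct sum of $2\times2$ blocks $\Delta_{k}=\operatorname{diag}(\lambda_{k},\lambda_{k}^{-1})\in\operatorname{Sp}(1)$, and hence $\Delta=\Delta_{A}\oplus\Delta_{B}$ with $\Delta_{A}\in\operatorname{Sp}(n_{A})$, $\Delta_{B}\in\operatorname{Sp}(n_{B})$. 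Since $D\geq\tfrac{\hbar}{2}I_{2n}$ gives $\Sigma\geq\tfrac{\hbar}{2}SS^{T}$, one obtains $\Sigma^{U}=U\Sigma U^{T}\geq\tfrac{\hbar}{2}\Delta=\tfrac{\hbar}{2}(\Delta_{A}\oplus\Delta_{B})$, and (\ref{papb1}) holds with $S_{A}=\Delta_{A}^{1/2}$, $S_{B}=\Delta_{B}^{1/2}$. That single observation---diagonalize $SS^{T}$, not $\Sigma$, and use the reciprocal-pair spectrum---is what turns your outline into a proof; without it the ``refining $U_{0}$'' step remains a hope rather than an argument.
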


\begin{proof}
Let us write as above $\rho_{\Sigma}=\rho_{\Sigma-\Sigma_{0}-}\ast\rho
_{\Sigma_{0}}$ where $\Sigma_{0}=\tfrac{1}{2}\hbar SS^{T}$, $S\in
\operatorname*{Sp}(n)$ as in the Williamson diagonalization (\ref{Williamson}%
). Since $SS^{T}$ is positive definite and symplectic there exists $U\in U(n)$
such that $SS^{T}=U^{T}\Delta U$ where $\Delta$ is a diagonal matrix whose
diagonal elements are the eigenvalues $\lambda_{1},...,\lambda_{2n}$ of
$SS^{T}$ (\cite{Birk}, Prop. 2.13). We thus have
\[
\Sigma_{0}^{U}=U\Sigma_{0}U^{T}=\tfrac{1}{2}\hbar\Delta
\]
and the positive definite symmetric matrix
\[
\Sigma^{U}=U\Sigma U^{T}=U(\Sigma-\Sigma_{0})U^{T}+\Sigma_{0}^{U}\geq
\Sigma_{0}^{U}%
\]
is the covariance matrix of the Gaussian function $\rho_{\Sigma}^{U}%
=\rho_{\Sigma}\circ U^{-1}$. To the latter corresponds the operator
$\widehat{\rho_{\Sigma}^{U}}$ in view of the symplectic covariance of Weyl
operators \cite{Birk}. Let us prove that $\widehat{\rho_{\Sigma}^{U}}$ is
separable. The eigenvalues of $SS^{T}>0$ are all positive and appear in pairs
$(\lambda,1/\lambda)$. In the $AB$-ordering $\Delta$ has the form
$\Delta=\Delta_{A}\oplus\Delta_{B}$ with
\[
\Delta_{A}=\bigoplus_{k=1}^{n_{A}}\Delta_{k}\text{ \ },\text{ \ }\Delta
_{B}=\bigoplus_{k=n_{A}+1}^{n_{A}+n_{B}}\Delta_{k}%
\]
where $\Delta_{k}=%
\begin{pmatrix}
\lambda_{k} & 0\\
0 & \lambda_{k}^{-1}%
\end{pmatrix}
$ for $k=1,...,n$. Clearly $\Delta_{A}\in\operatorname*{Sp}(n_{A})$ and
$\Delta_{B}\in\operatorname*{Sp}(n_{B})$. Since $\Sigma^{U}\geq\Sigma_{0}^{U}$
it follows that
\[
\Sigma^{U}\geq\tfrac{1}{2}\hbar(\Delta_{A}\oplus\Delta_{B})
\]
hence $\widehat{\rho_{\Sigma}^{U}}$ is separable as claimed in view of
(\ref{papb1}) setting $S_{A}=\Delta_{A}^{1/2}$ and $S_{B}=\Delta_{B}^{1/2}$.
\end{proof}

\begin{remark}
In the physical literature symplectic rotations are called \textquotedblleft
passive linear transformations\textquotedblright\ \cite{wolf}. The result
above can thus be restated by saying that every Gaussian state can be made
separable by a passive linear transformation.
\end{remark}

\begin{acknowledgement}
This work has been financed by the Grant P 33447 N of the Austrian Research
Fund FWF.
\end{acknowledgement}

\begin{acknowledgement}
I would like to extend my gratitude to Hans Feichtinger for several very
useful comments and suggestions about the functional setting of this paper.
Also my greatest thanks to G\'{e}za Giedke, Ludocico Lami, and Fernando
Nicacio for having pointed out errors in an early version of this paper.
\end{acknowledgement}

\end{document}